\newtheorem{Lemma}{Lemma}
\newtheorem{Definition}{Definition}
\theoremstyle{remark}
\newtheorem{remark}{Remark}
\begin{document}

\title{Rateless Coded Blockchain for Dynamic IoT Networks}

\author{Changlin Yang, \IEEEmembership{Member, IEEE}, Alexei Ashikhmin, \IEEEmembership{Fellow, IEEE}, Xiaodong Wang, \IEEEmembership{Fellow, IEEE}, Zibin Zheng, \IEEEmembership{Fellow, IEEE}
%

\thanks{C. Yang and Z. Zheng are with the School of Software Engineering, Sun Yat-Sen University, Zhuhai, Guangdong, China, 519000. (e-mail: \{yangchlin6, zhzibin\}@mail.sysu.edu.cn).

A. Ashikhmin is with the Communications and Statistical Sciences Research Department, Nokia Bell Laboratories, Murray Hill, NJ, USA (e-mail: alexei.ashikhmin@nokia-bell-labs.com)

X. Wang is with the Electrical Engineering Department, Columbia University, New York, NY, USA (e-mail: wangx@ee.columbia.edu).


}}

\maketitle
\newtheorem{proposition}{Proposition}

\begin{abstract}

A key constraint that limits the implementation of blockchain in Internet of Things (IoT) is its large storage requirement resulting from the fact that each blockchain node has to store the entire blockchain. This increases the burden on blockchain nodes, and increases the communication overhead for new nodes joining the network since they have to copy the entire blockchain. In order to reduce storage requirements without compromising on system security and integrity, coded blockchains, based on error correcting codes with fixed rates and lengths, have been recently proposed. This approach, however, does not fit well with dynamic IoT networks in which nodes actively leave and join. In such dynamic blockchains, the existing coded blockchain approaches lead to high communication overheads for new joining nodes and may have high decoding failure probability. This paper proposes a rateless coded blockchain with coding parameters adjusted to network conditions. Our goals are to minimize both the storage requirement at each blockchain node and the communication overhead for each new joining node, subject to a target decoding failure probability. We evaluate the proposed scheme in the context of real-world Bitcoin blockchain and show that both storage and communication overhead are reduced by 99.6\% with a maximum $10^{-12}$ decoding failure probability.


%
\end{abstract}

\begin{IEEEkeywords}
coded blockchain, dynamic IoT networks, storage reduction, decoding failure probability, rateless code.
\end{IEEEkeywords}

\maketitle

\section{Introduction}
\label{Intro}

The blockchain technology has found wide applications in 
%
Internet-of-things (IoT) \cite{dai2019blockchain}
due to its distributive, secure and integrated properties \cite{karame2016bitcoin}. 
%
One of the key challenges that restricts the implementation of blockchain in IoT devices is the high storage requirements for their users \cite{fan2022scalable}. 
For example, a Bitcoin blockchain needs 380 GB hard-disk space in 2022 \cite{bitcoin_size}  and takes several days to download for a fully functional Bitcoin Core client \cite{bitcoin_core}.
This makes it hard for portable devices, such as smart phones and laptops, to serve as full nodes in a blockchain, which weakens the distributive property. 
%

There are several ways to reduce the the blockchain storage requirements \cite{zhou2020solutions}. 
The first is simplified payment verification (SPV) client, namely {\em light} node \cite{nakamoto2008bitcoin}. 
A light node stores only the block headers and relies on the full nodes to verify transactions. 
Hence, it can be compromised if it relies on some malicious nodes \cite{karame2016bitcoin}. 
The second way is to prune old information that are not needed for verifying new blocks, i.e., the spent Unspent Transaction Output (UTXO) can be deleted when using Bitcoin \cite{bitcoin_core}.
This reduces the needed storage space while retaining the ability to `mine' new blocks. 
The work in \cite{dryja2019utreexo} further reduces the storage requirement of a pruned node by only storing the proofs of UTXOs it holds.
%
%
However, the deleted information will eventually be inaccessible if all nodes are pruned node. 
%
%
%
The last approach to reduce storage is `sharding' \cite{zamani2018rapidchain} \cite{huang2022brokerchain} the blockchain nodes into groups. Each group of nodes stores and operates a portion of the entire blockchain and so the storage at each node is reduced. However, the security level of the `sharding' approach is directly related to the group size \cite{das2018security}. 
Hence, all these solutions compromise on blockchain security or/and integrity. 

Recently, coded blockchains have been proposed to reduce the blockchain storage requirements without loss of security and integrity \cite{dai2018low}\cite{perard2018erasure}\cite{wu2020distributed} \cite{raman2021ToN} \cite{yang2021storage}. 
They can also improve block propagation efficiency \cite{cebe2018network} \cite{chawla2019velocity} \cite{9609913} \cite{zhang2022speeding}.
In addition to storage and transmission cost reduction, the coded blockchains can improve the security of existing SPV client via fraud proofs \cite{al2018fraud} \cite{santini2022optimization} or  coded merkle trees \cite{yu2020coded} \cite{mitra2021overcoming} \cite{mitra2022polar}, and also the security of sharding solutions via coded shards \cite{li2020polyshard} \cite{sasidharan2021private}  \cite{gadiraju2020secure}, etc.
In such blockchains, nodes encode the {\em data} of blocks using fixed rate error correction codes, such as Reed-Solomon codes  \cite{1994Reed}, low-density parity-check (LDPC) codes \cite{2002Low} and polar codes \cite{PolarCodes}, and distributively store the encoded blocks. 
In particular, a group of $k$ blocks are encoded into $n$ encoded blocks using an $[n, k]$ code. Then the storage at nodes reduces to about $1/k$ that of the replicated blockchain.
When the {\em data} of a particular block is needed by a node, it can be recovered by collecting slightly more than $k$ encoded blocks from other nodes and running a decoder of the error correction code.
Nodes can encode only old blocks that are not needed for `mining', which resembles the old block pruning approach, but without any loss of security and integrity. 
%
%
%
%

In practice, an IoT blockchain network is dynamic and nodes frequently join or leave the network. 
%
%
Although for most of the time, the numbers of nodes joining and leaving are steady, sharp changes may occur sometimes. For example, the number of Bitcoin nodes dropped by 24\% in Sep. 2020 \cite{bitcoin_number_2015}.
The encoding schemes in \cite{dai2018low} -\cite{gadiraju2020secure} are designed for networks with fixed  number of nodes, but some nodes can be temporarily unavailable.  These encoding schemes are not well-suited for  dynamic networks with the following reasons.
%


\begin{itemize}
    
     \item In a dynamic network, the number of nodes may rapidly change. Thus using an $[n,k]$ code of a fixed length $n$ becomes very inconvenient.  In contrast, rateless codes
     is a natural solution for this scenario.

    \item The nodes in a dynamic network often permanently leave the network. Thus, if we use a fixed length $[n,k]$ code, but many of the $n$ nodes, that were available at the time of encoding, permanently leave, then other nodes will not be able to reconstruct these $k$ blocks.

    \item One possible way of using fixed length codes in a dynamic network is to use an $[N, k]$ code with length $N$ significantly larger than the number of network nodes $n$. So, when a new node, say node $n+1$, joins the network it can compute its own parity check symbol. However, in this approach, the new node first has to download more than $k$ encoded blocks from other nodes to decode the $k$ original blocks, which means very large communication overhead and high decoding complexity. This problem does not exist in the proposed rateless coded blockchain. 
    

\end{itemize}

Rateless Luby transform (LT) codes \cite{luby2002lt} were used to encode the blockchains in \cite{kadhe2019sef} \cite{quan2019transparent} \cite{pal2020fountain} \cite{qin2022downsampling}. Moreover, an attempt of raptor code is discussed in \cite{tiwari2021secure}. 
However, the parameter $k$ is fixed in these works and as a result these schemes do not well fit dynamical blockchains. In addition, most of these works use non-systematic LT codes. This forces a new joining node first to decode all previously generated codewords. The number of these codewords can be very large since they contain almost entire blockchain and decoding of even a single codeword 
takes visible computational resources. Thus, this approach results in very large computational and communication resources needed for new nodes.

In this work, we use rateless raptor codes \cite{shokrollahi2006raptor}. These codes have a low decoding failure probability than LT codes \cite{shokrollahi2006raptor}.
%
%
We use systematic rateless codes such that the original blocks are also stored at some nodes in the network. Thus, when a node joins the network, it only needs to collect a few original blocks and encode its own parity check (or original) blocks for each group of previously encoded blocks.
%
We adjust parameter $k$ based on the network condition, which allows us to find a good trade-off between storage efficiency and blockchain reliability.  
We propose to use novel {\em enhanced blocks} to store the coding parameters of each group of encoded blocks. 
%
%
We evaluate the proposed rateless coded blockchain in the Bitcoin network. The results 
show that by ensuring a less than $10^{-12}$ decoding failure probability,  the node storage and the communication overheads needed by a new joining node both reduce to 0.4\% as compared with the replicated blockchain.

This paper addresses a critical challenge that has not been considered by previous coded blockchain studies: how to ensure successful decoding when the number of nodes in the network changes. In particular, all existing coded blockchain works use a pre-determined information symbol length $k$. 
If sufficiently many nodes involved in encoding a particular codeword eventually leave the network, and this is a very possible event, it will be impossible to recover these $k$ information symbols. As a result, many blocks will be lost, which is not acceptable in most blockchain applications.  
%
Existing works do not have mechanism to update $k$. In this respect, this paper proposes a novel method to update $k$, encoding procedure for new joining nodes, an enhanced block structure to ensure the coding parameter achieves consensus among all nodes in the network. Up to the best of our knowledge all these important questions have not been addressed in the literature yet.

The remainder of this paper is organized as follows. 
Section \ref{sec_background} presents the background on coded blockchains and rateless codes. Section \ref{sec_coding_scheme} gives an overview of the proposed rateless coded blockchain. Section \ref{sec_code_block} presents the structure of our novel enhanced blocks. 
Section \ref{sec_NMA} and \ref{sec_k} present the protocol for a new node joining the network and our algorithm for choosing parameter $k$, respectively.  Simulation results are given in Section \ref{sec_simulation}. Finally, Section \ref{sec_conclusion} concludes the paper.


\section{Background on Coded Blockchain and Rateless Code \label{sec_background} }

\subsection{Background on Coded Blockchain}

In this subsection, we briefly describe how coded blockchains are organized in \cite{perard2018erasure} and \cite{wu2020distributed}. We will generalize this approach for the case of using rateless codes later in our work. 

Let $W$ be the index of the last block, and $B^W$ be the data of this block (that is the $W$-th block without its header). 
Slightly abusing notation, in what follows we will be calling by block both entire block  (data + header) and its data part $B^w$.
Let $\alpha$ be a sufficiently large integer such that with high probability blocks $0, 1, \ldots, W-\alpha$ do not contain forks that would grow. These blocks are called {\em confirmed blocks}. In a coded blockchain the confirmed blocks are partitioned into $M$ groups, each of which has been encoded by the fixed rate code, as it is shown in Fig.~\ref{block_groups}. 
%
%
%
Each group consists of $k$ blocks and each block $B^w$ is divided into $s$ symbols over the finite  field $\mathbb{F}_q$, denoted by $\mathbf{b}_w$. Assuming that the $m$-th group has blocks with indices $\{w_1, w_2, \ldots, w_k \}$, this group can be represented by a matrix with $k$ columns, denoted by $\mathbf{b}_{w_i}$,  and each column forms the corresponding original block of the group: 
%
\begin{align}
\label{equ_original}
\begin{bmatrix} 
    b_{w_1, 1} & b_{w_2, 1} & \ldots & b_{w_k, 1} \\
    b_{w_1, 2} & b_{w_2, 2} & \ldots & b_{w_k, 2} \\
    \vdots &  \vdots &  \vdots &  \vdots \\
    b_{w_1, s} & b_{w_2, s} & \ldots & b_{w_k, s} 
    \end{bmatrix}.
\end{align}
The $k$ entries of each row of \eqref{equ_original} are information symbols.  So, each row of \eqref{equ_original} is encoded into $n$ coded symbols using a $[n, k]_q$ linear code over $\mathbb{F}_q$, where  $q = 2^p$ and $p$ is the number of bits in each symbol $b_{w_i, j}$, with a $k \times n$ generator matrix $\mathbf{G}$. 
%
%
The code length $n$ is typically equal to the number of nodes in the network at the moment of encoding.
The encoding is conducted by multiplication of each row of \eqref{equ_original} by $\mathbf{G}$. As a result we obtain the vector 
\begin{align}
\label{cal_inter_blocks}
[u_{1, h}, \ldots, u_{n, h} ] = [ b_{w_1, h}, \ldots b_{w_k, h} ] \mathbf{G}.
\end{align}
Thus after encoding all rows of \eqref{equ_original}, we obtain the matrix
\begin{align}
\label{equ_intermedia}
\begin{bmatrix} 
    u_{1, 1} & u_{2, 1} & \ldots & u_{k, 1} & \ldots & u_{n, 1} \\
    u_{1, 2} & u_{2, 2} & \ldots & u_{k, 2} & \ldots & u_{n, 2} \\
    \vdots &  \vdots &  \vdots &  \vdots  &  \vdots &  \vdots \\
    u_{1, s} & u_{2, s} & \ldots & u_{k, s} & \ldots & u_{n, s}
    \end{bmatrix}.
\end{align}
Each node stores only one column of \eqref{equ_intermedia}, which greatly reduces the needed storage. We will denote the $i$-th column of \eqref{equ_intermedia} by $\mathbf{u}_i$. 

\begin{figure}[t]
\centering
   \includegraphics[scale = 0.5]{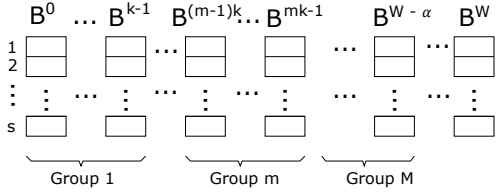}
  \caption{Blocks are encoded by groups. Each group consists of $k$ blocks. Each block is divided into $s$ symbols. }
  \label{block_groups}
\end{figure}


It is convenient to use systematic $\mathbf{G}$, such that the first $k$ columns of $\mathbf{G}$ form the identity matrix $\mathbf{I}_k$. In this case we have $\mathbf{u}_i = \mathbf{b}_{w_i}, i = 1, ...,k$. 
If a node, say node $j$, needs block $\mathbf{u}_i$, for example for verifying a transaction, it broadcasts a request and obtains $\mathbf{u}_i$. If the node that stores $\mathbf{u}_i$ is not available, then node $j$ collects a few other columns of \eqref{equ_intermedia} and use a decoder of the $[n, k]_q$ code to repair $\mathbf{u}_i$ \cite{wu2020distributed}. 
%
On the other hand, when repairing in \cite{wu2020distributed} fails or $\mathbf{G}$ is a non-systematic generator matrix, as in \cite{perard2018erasure}, node $j$ needs to collect $(1+\epsilon)k$ columns of \eqref{equ_intermedia} from other nodes and to decode all the information symbols $\mathbf{b}_{w_1} = \mathbf{u}_1, ..., \mathbf{b}_{w_k} = \mathbf{u}_k$.
Here $\epsilon$ is the {\em decoding overhead} that varies for different codes.

\subsection{Background on Rateless Codes}

We now briefly review the rateless codes and their variances. 
By using the rateless codes, 
the transmitter uses $n$ information symbols to generate potentially unlimited number of coded symbols, and
sends them to one or more receivers.
As soon as a receiver collects any $(1+\epsilon)n$ coded symbols, it can decode, with a high probability of success, the $n$ information symbols. 

\subsubsection{LT code \cite{luby2002lt} \label{sec_LT_code}}
Let $\{u_1, \ldots, u_n\}$ be $n$ information symbols over $\mathbb{F}_q,  q=2^p$.
The reason why we use the same notation $u_i$, that we used for coded symbols in the previous subsection, is that later we are going to combine a linear $[k,n]$ code and $[N,n]$ LT code to get a raptor code for our blockchain.
The summation $\oplus$ of such symbols is conducted by bitwise XOR operation. Note that since $q=2^p$, the subtraction $\ominus$ coincides with $\oplus$. 
%
With an LT code, one generates 
each coded symbol $v_j$ as follows.  
It first chooses a degree $d$ from a {\em generator degree distribution} $\Omega(d)$. It then randomly chooses $d$ information symbols  $\{ u_{i_1}, \ldots, u_{i_d} \}$
from $\{u_1, \ldots, u_n\}$ with equal probability. These $d$ information symbols are {\em neighbors} of $v_j$. We denote the set of indices of these neighbors by 
\begin{align}
    \mathcal{N}(v_j) = \{i_1, \ldots, i_d \}. 
\end{align}
Lastly, the coded symbol is calculated via the XOR operation
\begin{align}
\label{equ_LT_encode}
    v_j = \sum_{i \in \mathcal{N}(v_j)} u_i =  u_{i_1} \oplus \ldots \oplus u_{i_d}. 
\end{align}
The coded symbol $v_j$ is the {\em edge} of the information symbols $\{ u_{i_1}, \ldots, u_{i_d} \}$. The set of all edges of $u_i$ is denoted by 
\begin{align}
    \mathcal{E}(u_i) = \{j : i \in \mathcal{N}(v_j) \}. 
\end{align}
An information symbol $u_i$ can be calculated using an edge coded symbol and its other neighboring information symbols as
\begin{align}
\label{equ_repair_1}
    u_i = v_j \oplus \sum_{h \in \mathcal{N}(v_j) \backslash \{i\}} u_h, \text{ for any }  j \in \mathcal{E}(u_i).
\end{align}
An example of encoding four information symbols into five LT coded symbols is shown in Fig. \ref{ratelss_intro_LT}. In this example we have 
$\mathcal{N}(v_1) = \{u_1\}$, $\mathcal{N}(v_2) = \{u_1, u_3\}$, $\mathcal{N}(v_3) = \{u_2, u_3, u_4\}$ and so on; and $\mathcal{E}(u_1) = \{v_1\}$, $\mathcal{E}(u_2) = \{v_3, v_4\}$, and so on.  

\begin{figure}[h]
\centering
   \includegraphics[scale = 0.5]{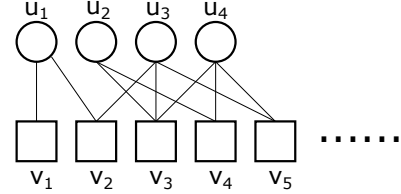}
  \caption{LT encoding: 4 information symbols (circles) are encoded into 5 coded symbols (squares) with degrees $d = 1, 2, 3, 2,  2, $ respectively.  }
  \label{ratelss_intro_LT}
\end{figure}

The receiver can start decoding when it receives $(1+\epsilon)n$ coded symbols. The decoding of the LT code is a `peeling' process. 
%
First, the decoder finds a coded symbol of degree one, that is $v_j$ with $|\mathcal{N}(v_j)| = 1$. This coded symbol in fact is an information symbol with index $\{i\} = \mathcal{N}(v_j)$. 
Having decoded the information symbol $u_i$, the decoder removes $u_i$ from the set of neighbors of all coded symbols from $\mathcal{E}(u_i)$, updates these coded symbols,  and also decreases by $1$ their degrees. 
For example, if $j' \in \mathcal{E}(u_i)$, then $v_{j'}$ and its set of neighbors are updated as follows 
\begin{align}
    v_{j'} \leftarrow v_{j'} \oplus u_i, \quad \mathcal{N}(v_{j'}) \leftarrow \mathcal{N}(v_{j'})\backslash \{i\}.
\end{align}
The decoder continues finding coded symbols of degree one and `peeling' the information symbols until no coded symbol of degree one left or all information symbols are successfully decoded.
%
If there are still undecoded information symbols, the receiver collects more coded symbols and restart the `peeling' process. For example, as shown in Fig. \ref{ratelss_intro_LT_decode}, for the LT code in Fig. \ref{ratelss_intro_LT}, the decoder can successfully decode information symbols $\{u_1, u_3 \}$ after it receives $\{v_1, v_2, v_3, v_4\}$. However, it needs to collect one more coded symbol $v_5$ to successfully decode $\{u_2, u_4\}$.

\begin{figure}[h]
\centering
   \includegraphics[scale = 0.5]{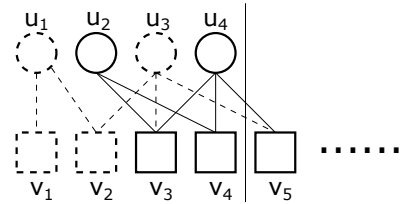}
  \caption{LT decoding: dashed circles indicate decoded information symbols. Dashed squares indicate coded symbols with zero degree. Dashed lines indicate the subtracted degree.   }
  \label{ratelss_intro_LT_decode}
\end{figure}

An issue with the LT code is the error floor \cite{mackay2005fountain} \cite{hussain2011error}. The error floor means that the decoding failure rate does not drop sufficiently fast with more received coded symbols. This happens because the neighbors of coded symbols are uniformly chosen between $\{1, \ldots, n \}$, and thus, there is a non-zero probability that an information symbol is not the neighbor of any received coded symbol, i.e., its edge is an empty set. 
%

\subsubsection{Raptor code \cite{shokrollahi2006raptor}}
In order to address the error floor issue and to reduce the encoding/decoding complexity of the LT code, a raptor code first pre-codes the information symbols using a fixed rate outer code, such as LDPC, Reed-Solomon, or another linear code. The pre-coded symbols are called the {\em intermediate symbols}, which are then encoded into {\em coded} LT symbols.  
The raptor codes have very low decoding failure probabilities, and thus they fit well with the stringent security requirement of blockchains. 
Moreover, the encoding and decoding complexity of raptor codes are $\mathcal{O}(k \log (\frac{1}{\epsilon}))$ \cite{shokrollahi2006raptor}, which is approxiamtely a linear function of $k$.
In what follows, we use raptor codes for our coded blockchains.

\section{Overview of Proposed Rateless Coded Blockchain \label{sec_coding_scheme}}

We first define the dynamic IoT blockchain network as follows. Time is divided into time epochs and $\mathcal{V}_t$ denotes the set of network nodes available in the network in the $t$-th time epoch. 
Each node in the network can send `get address' message recursively to find all reachable nodes in the network to obtain $\mathcal{V}_t$ \cite{bitcoin_data}. Moreover, an address can be considered as valid only if it holds valid UTXOs to prevent Sybil attack. 
We assume that the blockchain grows at a consistent rate of $\beta$ blocks per time epoch. 
We further assume that in each time epoch $t$, $l_t$ nodes leave and $e_t$ nodes join the network, where $l_t$ and $e_t$ are Poisson distributed with means $\lambda_l$ and $\lambda_e$, respectively. 
We next present the encoding and decoding procedures of the proposed rateless coded blockchain.
%
%



\subsection{Encoding Procedure \label{subsec_encoding} }

We divide each block $B^w$ into $s$ symbols over $\mathbb{F}_q$ and form groups as in Fig. \ref{block_groups}. Thus for each group we obtain a matrix in the form of \eqref{equ_original}. For each group we encode $s$ rows of \eqref{equ_original} into $s$ code vectors using the same raptor code for each row. 
%
%
%
The number of blocks in each group varies depending on the network condition (detailed in Section \ref{sec_k}). 
%
We use $\mathcal{G}_m$ to denote the set of blocks in the $m$-th group and $k_m = |\mathcal{G}_m|$. 
In general $k_m$ is large if the number of active nodes in the network is large, and nodes do not actively leave the network. 
We encode the blocks in each group $\mathcal{G}_m$ using a systematic raptor code \cite{shokrollahi2006raptor} with an $[n_m, k_m]$ pre-code with a generator matrix $\mathbf{G}_m$. 
%
We assume that the  {\em code rate} $r=k_m/n_m$ of the pre-code is the same for all groups and time epochs.   
The encoding and decoding procedures are the same for all groups, thus we sometimes may omit the subscript $m$ to simplify the notation. 

Since the same encoding and decoding is used for each row of \eqref{equ_original}, and therefore each row of \eqref{equ_intermedia}, we will write $b_i, i=1,...,k$, and $u_i, i=1,...,n$, meaning any row of  \eqref{equ_original} and \eqref{equ_intermedia} respectively. Similarly for the matrix \eqref{equ_encoded} presented below, we will write $v_i, i=1,...,N$, meaning any row of \eqref{equ_encoded}. 
%
The information symbols $\{b_i \}$ are first encoded into intermediate symbols $\{u_i\}$ using \eqref{equ_original}-\eqref{equ_intermedia}. 
%
%
%
We call each column of \eqref{equ_original}, denoted by $\mathbf{b}_i$, an {\em original block}; and each column of \eqref{equ_intermedia}, denoted by $\mathbf{u}_i$, an {\em intermediate block}. 
%
%
%
%
We assume that $\mathbf{G}_m$ is in the systematic form, and therefore $\{\mathbf{u}_{1}, \ldots, \mathbf{u}_{k}\} = \{\mathbf{b}_{1}, \ldots, \mathbf{b}_{k}\}$. 

The intermediate symbols of each row of \eqref{equ_intermedia} are then encoded into coded symbols using a systematic LT code with a degree distribution $\Omega(d)$. As a result we obtain the matrix 
\begin{align}
\label{equ_encoded}
\begin{bmatrix} 
    v_{1, 1}  & v_{2, 1}  & \ldots & v_{k, 1} & \ldots & v_{n, 1}  & \ldots & v_{N, 1} \\
    v_{1, 2}  & v_{2, 2}  &\ldots & v_{k, 2}  & \ldots & v_{n, 2}  & \ldots & v_{N, 2} \\
    \vdots  &  \vdots & \vdots & \vdots   &  \vdots  &  \vdots  &  \vdots  &  \vdots \\
    v_{1, s}  & v_{2, s}  & \ldots & v_{k, s}  & \ldots & v_{n, s}  & \ldots & v_{N, s}
    \end{bmatrix}.
\end{align}
Here $N$ is the number of network nodes at the moment of encoding. Each column $j$ of \eqref{equ_encoded} is a {\em coded block} denoted by $\mathbf{v}_j$. 
We will use $\mathcal{E}(\mathbf{u}_i)$ to denote the set of edges for symbols $\{u_i\}$ and $\mathcal{N}(\mathbf{v}_i)$ to denote the set of neighbors for symbols  $\{v_i\}$, respectively.
%
We assume systematic encoding and therefore we have {\em systematic coded blocks}  $\{\mathbf{v}_{1}, \ldots, \mathbf{v}_{n}\} = \{\mathbf{u}_{1}, \ldots, \mathbf{u}_{n}\}$ and {\em non-systematic coded blocks} $\{\mathbf{v}_{n+1}, \ldots, \mathbf{v}_{N}\}$. A coding graph from $k$ original symbols $\{b_i\}$ to $n$ intermediate symbols $\{u_i \}$ and then the $N$ coded symbols $\{v_j \}$ is shown in Fig. \ref{encoding_graph}. 
Each node stores only one coded block for each group. Note that, the indices of the coded block that a node stores for different groups may be different. For example, a node may store $\mathbf{v}_a$ for the $m$-th group and $\mathbf{v}_b$ for the $m'$-th group with $a \neq b$. 
However, to make our description simple, we assume that all nodes are enumerated by integers and that node $j$ stores $\mathbf{v}^m_j$ for all groups $m = 1, ...., M$, where $M$ is the total number of groups. 
In the proposed rateless coded blockchain, we use a degree distribution $\Omega(d)$ with $\Omega(1)=0$. An example will be considered in Section \ref{sec_simulation}.

\begin{figure}[t]
\centering
  \includegraphics[scale = 0.25]{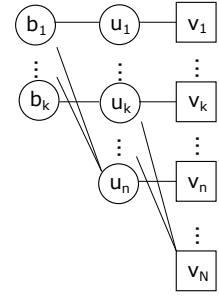}
  \caption{Block coding graph using systematic raptor code.  }
  \label{encoding_graph}
\end{figure}

\subsection{Decoding Procedure \label{sec_decoding} }

When an original block $\mathbf{b}_i$ from group $m$ is needed by a node $j$, node $j$ broadcasts a request to all nodes in the network. Then node $i$ sends $\mathbf{v}_i$  from group $m$ (note that $\mathbf{v}_i = \mathbf{b}_i$) to
node $j$. 
However, the blockchain network is dynamic and node $i$ may leave the network or be disconnected.
When this happens, node $j$ can `repair' the needed block using the procedure that will be presented in Section \ref{sec_block_repair}. 
Briefly, node $j$ `repairs' the needed block using \eqref{equ_repair_1} by obtaining some other coded blocks from active nodes. 
If it fails, node $j$ decodes all blocks in group $m$. To do this, the node $j$ collects $(1+\epsilon)k$ distinct coded blocks from other nodes, which are the columns of  \eqref{equ_encoded}. Since the non-systematic coded blocks $\{\mathbf{v}_{n+1}, \ldots, \mathbf{v}_{N}\}$ have random number of neighbors and the neighbors are randomly chosen, with a very high probability arbitrary $(1+\epsilon)k$ coded blocks are distinct. The decoder first runs the LT decoding as described in Section \ref{sec_LT_code} to decode the intermediate blocks. It then runs a decoder of the pre-code to obtain the original blocks. 
Since each row of \eqref{equ_original} is encoded into the corresponding row of \eqref{equ_encoded} according to the same generator matrix of the pre-code and the same LT encoding procedure, the decoder can successfully decodes all original blocks if it can decode the first row of \eqref{equ_original}.

We would like to note that since we use an LT code, all nodes store distinct coded blocks with very high probability, even if the number of nodes $N$ keeps growing.  
%
This reduces the chance that multiple nodes store the same coded block. 
Moreover, the LT encoding procedure
is much simpler than the design of a $k \times N$ generator matrix need for usual (not rateless) linear codes, especially when $N$ is large and not known in advance. 
%
%
When a new node joins the system, it only needs to collect on average $\mathbb{E}[\Omega(d)]$ systematic coded blocks to encode its own coded block to store for each group, see the details in Section \ref{sec_NMA}. 
This significantly reduces the communication overhead as compared with \cite{kadhe2019sef}.
The code length $k_m$ is calculated based on the network condition. 
If we expect that the network conditions are going to deteriorate (for example we expect that many nodes may leave the network), then we can decode and further re-encode all previously encoded groups with small value $k_m$. 
All nodes in the network must have a consensus on the blocks included into group $m$ and the generator matrix $\mathbf{G}_m$ of the pre-code. 
%
To provide such consensus, we propose a novel {\em enhanced block} in Section \ref{sec_code_block} that can be easily embedded into the current blockchain. 


\begin{figure}[t]
\centering
\subfigure[Rateless coded blockchain \label{Logic_cain}]{\includegraphics[scale = 0.45]{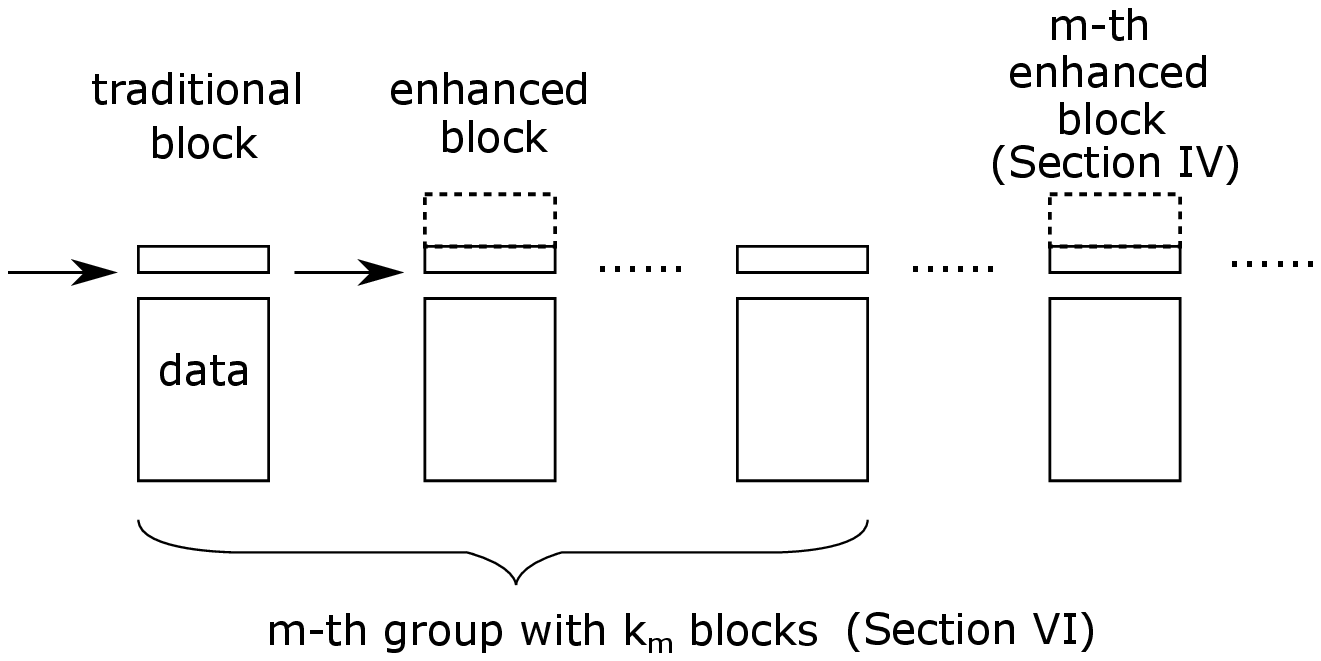}}\hspace*{-0em} \\
\subfigure[New joining node calculates coded block it needs to store. \label{new_join}]{\includegraphics[scale = 0.45]{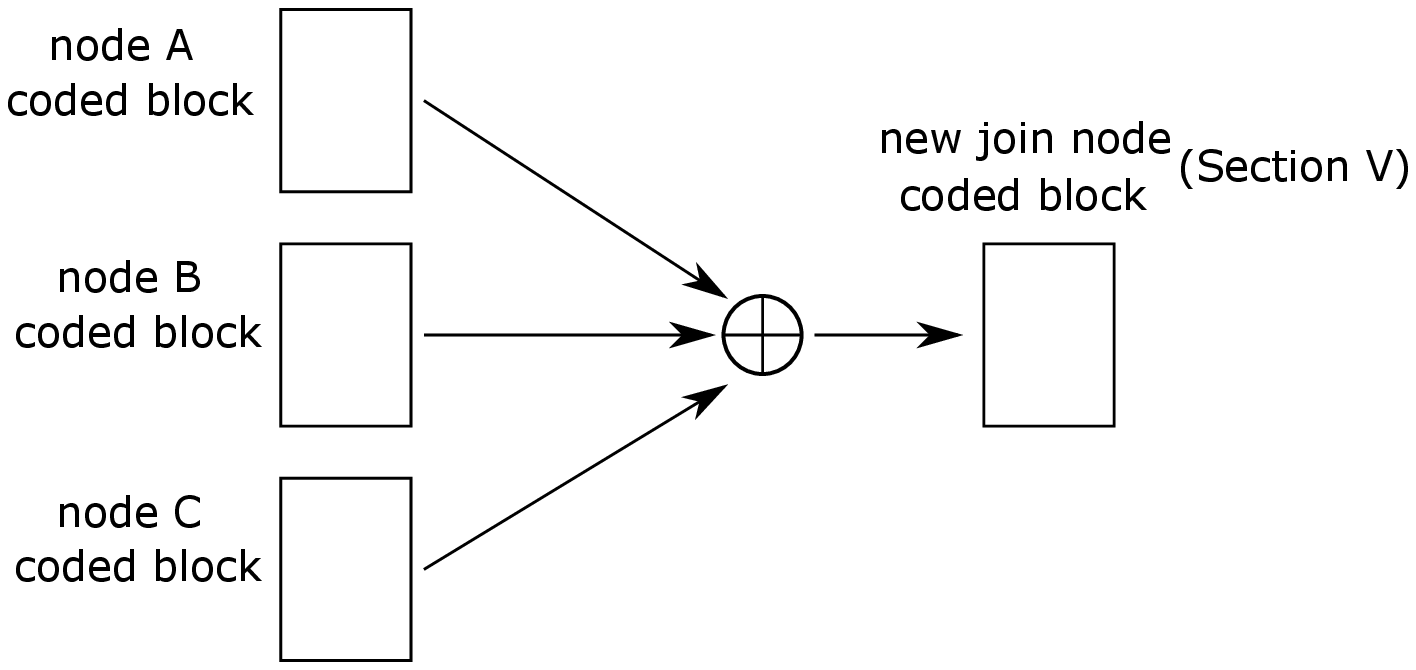}}
\caption{An overview of the proposed rateless coded blockchain.}
\label{blockchains}
\end{figure}

An overview of the proposed coded blockchain is shown in Fig. \ref{Logic_cain} and \ref{new_join}. 
In Fig. \ref{Logic_cain}, data and headers of the blocks that form the $m$-th group are shown, together with the $m$-th enhanced block that stores coding parameters for group $m$. 
The $m$-th enhanced block is always generated later than any blocks in the $m$-th group.
%
Details on how enhanced blocks are organized and how we determine the group size $k_m$ are given in 
Section \ref{sec_code_block} and Section \ref{sec_k}, respectively. 
%
When a new node joins the network in Fig. \ref{new_join}, it only needs to collect a small number of coded blocks from other nodes to calculate the coded block to store. Details of the new node joining procedure will be given in Section \ref{sec_NMA}.

\begin{remark}
The proposed encoding scheme is visibly different from the standard raptor code encoding, for instance presented in \cite{shokrollahi2006raptor}. In the standard scenario one simply  produces a semi-infinite sequence of parity symbols, since systematic  (information) symbols are always available and never erased. In our case, however, systematic symbols may become unavailable if some systematic nodes leave the network. This result in significant changes of encoding procedure, which takes into account possible absence of some systematic symbols (see Section V). This also results in that new (joining) nodes generate code symbols according to a different degree distribution, compared to the degree distribution used at the first instance of encoding a particular codeword (see Lemma 1). Up to the best of our knowledge raptor codes in such regime have not been studied yet. Since our encoding and the effective erasure channel are substantially different from the standard ones, the standard analysis of the code performance, see \cite{shokrollahi2006raptor}, becomes inapplicable. For this reason we proposed a different approach for estimation of the code performance, see Section VI. 
\end{remark}

%
%

\section{Enhanced Block \label{sec_code_block}}

In the existing coded blockchain system, see for example \cite{wu2020distributed}, the value of $k$ is fixed for all groups. 
%
%
However, in our scheme, the group size varies according to the network condition and all nodes must have a consensus on coding parameters used for a particular group. 
Therefore, in this section, we present an enhanced block, which consists of a traditional block and some extra fields in the header to store the coding parameters.
%
%
%
The block structure, mining and verification of the enhanced block is slightly different from traditional blocks. 


\subsection{Enhanced Block Structure \label{sec_CID_structure}}
The data of enhanced blocks is organized in the same way as in usual (not enhanced) blocks. 
However, the enhanced block header contains four additional fields compared with the header of a usual block, see Fig. 6. These new fields are 
%
\begin{itemize}
    \item {\em group sequence number}: this field is the sequence number of the enhanced block, which is independent of the block sequence number. 
    
    \item {\em group indices}: this field contains the block sequence numbers of the blocks to be encoded, i.e., $\mathcal{G}_m = \{w_1, \ldots, w_k \}$. Note that $k_m = |\mathcal{G}_m|$ and $n_m = k_m / r$.
    
    \item {\em generator matrix}: this field contains the generator matrix $\mathbf{G}_m$.
    It should be noted that this field can be omitted if the blockchain system uses a pre-defined generator matrix calculation algorithm. For example, if we use an LDPC code to encode the intermediate blocks, the code can be generated with progressive edge-growth (PEG) algorithm \cite{hu2005regular} 
    and therefore it is enough to store values for $k_m$ and $n_m$ in the {\em group indices} field. 
    
    \item {\em hash values}: this field contains the hash values of the non-systematic intermediate blocks $\{\mathbf{u}_{{k+1}}, \ldots, \mathbf{u}_{n}\}$. %
    When a new node joins the network, it needs systematic and sometimes non-systematic intermediate blocks in order  to generate its own coded block. 
    %
    %
    The integrity of the systematic intermediate blocks, which are in fact the original blocks, can be verified via their merkle tree roots in the corresponding block headers.
    The hash values field ensures that the new node can also validate the integrity of the non-systematic intermediate blocks.
\end{itemize}
%

\begin{figure}[h]
\centering
   \includegraphics[scale = 0.25]{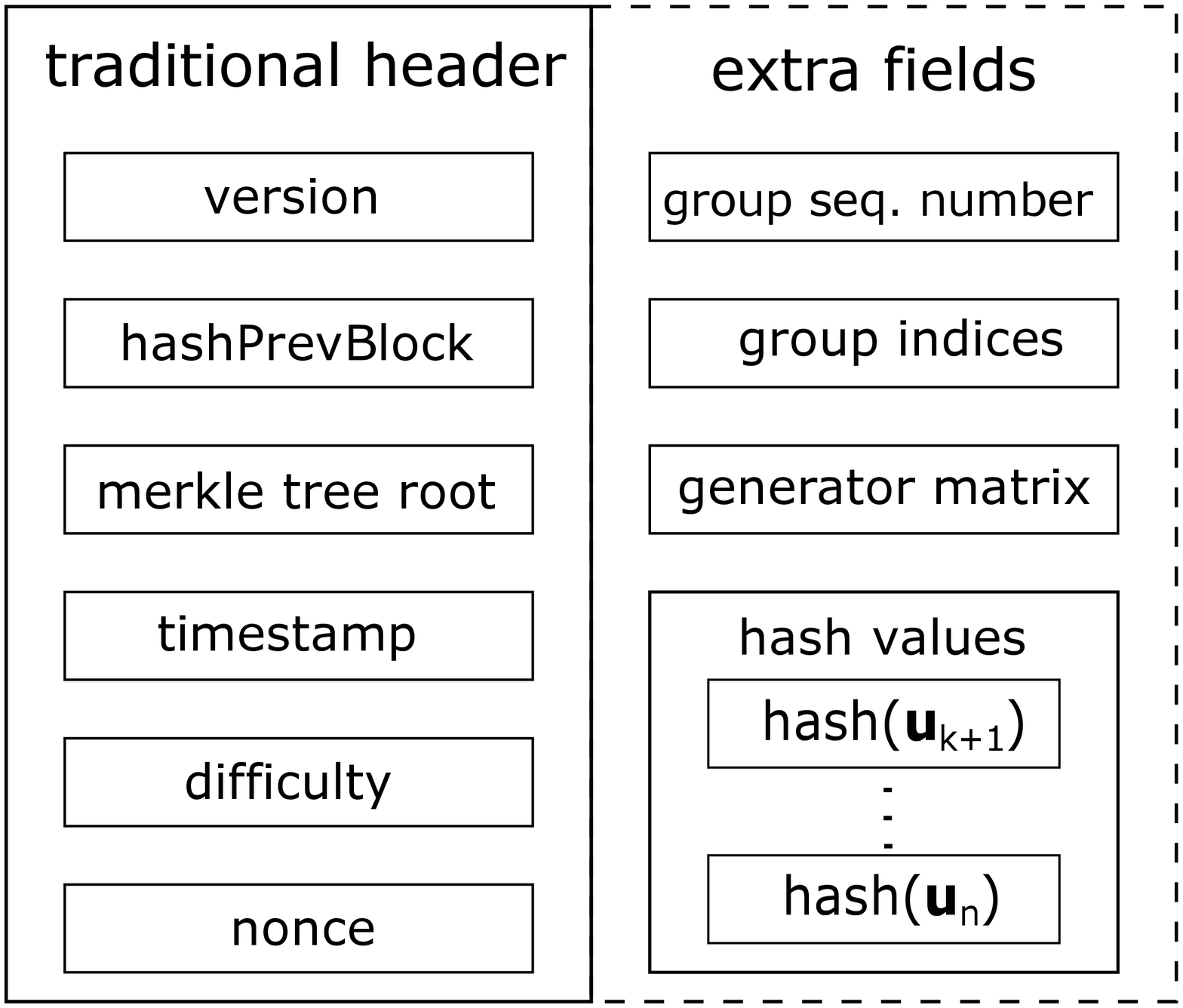}
  \caption{Enhanced block header structure. }
  \label{code_index_block_data_structure}
\end{figure}




\subsection{Mining}

Recall that we employ the PoW protocol in which all nodes competitively `mine' new blocks. 
The `mining' procedure for the traditional blocks is the same as that in the existing blockchain. However, `mining' an enhanced block needs some more efforts. 
Let $M$ be the total number of `mined' enhanced blocks. All nodes in the network use the same algorithm to compute the value $k_{M+1}$, i.e., the size of the next group (see Section \ref{sec_NMA}).  
%
%
Each node $j$ also maintains a block pool $\mathcal{P}_j$, which contains confirmed, but not yet encoded blocks, see Fig. \ref{blockchain_model}.
%
%
A node $j$ starts to `mine' the $(M+1)$-th enhanced block when $|\mathcal{P}_j| \ge k_{M+1}$. This includes the following steps. 

\begin{enumerate}
    \item The `miner' first chooses transactions to include into the enhanced block and calculates the merkle tree root. It then fulfills the fields of version, hash of previous block, timestamp and difficulty. So this part is similar to  `mining' a traditional block. 
    
    \item The `miner' forms the set $\mathcal{G}_{M+1} \subset \mathcal{P}_j$ with size $|\mathcal{G}_{M+1}| = k_{M+1}$. 
    
    \item  The `miner' then forms an $k_{M+1} \times n_{M+1}$ generator matrix $\mathbf{G}_{M+1}$.  
    
    \item Next, the `miner' calculates the intermediate blocks $\{\mathbf{u}_{1}, \ldots, \mathbf{u}_{n_{M+1}}\}$ using \eqref{cal_inter_blocks} and their hash values. 
    
    
    \item Lastly, similar to `mining' a traditional block, the `miner' finds the nonce such that the hash value of the enhanced block header is less than the difficulty level.
\end{enumerate}
After the above procedure, the `miner' produces an enhanced block with the header structure described in Section \ref{sec_CID_structure}. We see that steps 2-4 are not present in the mining of a traditional block. 
However, the computational complexity of these steps is negligible compared with finding the nonce. Hence, the incentive for `mining' an enhance block is the same as that for mining a traditional block.

\begin{figure}[h]
\centering
   \includegraphics[scale = 0.4]{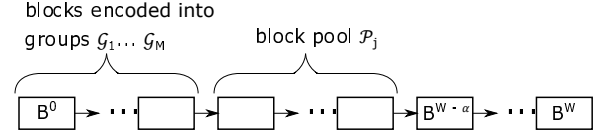}
  \caption{Coded blockchain with block pool.}
  \label{blockchain_model}
\end{figure}

%

\subsection{Verification}
When a node receives a new `mined' enhanced block with sequence number $M+1$, the node verifies this block. If the new enhanced block is valid, the node attaches it to its chain and terminates its own `mining' process of the $(M+1)$-th enhanced block. Otherwise, it continues its own `mining' process. In particular, the verification at node $j'$ should include, but not limited to, the following steps:
\begin{itemize}
    \item The enhanced block is correctly formed including block data and block header structure. This step can include the same verification process for traditional blocks, such as checking nonce and target, size of each field, chain to the previous block, etc. 
    \item The block pool size of the verifying node $j'$ is larger than or equal to $k_{M+1}$, i.e.,
    \begin{align}
        |\mathcal{P}_{j'}| \ge k_{M+1}. \label{cons_pool_size}
    \end{align}
    \item All blocks in the encoding group are within the block pool, i.e.,
    \begin{align}
        \mathcal{G}_{M+1} \subset \mathcal{P}_{j'}. 
    \end{align}
    \item Calculating $\{\mathbf{u}_{k_{M+1}+1}, \ldots, \mathbf{u}_{n_{M+1}}\}$ using \eqref{cal_inter_blocks} and $\mathbf{G}_{M+1}$ and verifying their hash values. 
\end{itemize}
We recall that when mining an enhanced block, a node needs slightly more resources, because it has to find $k_{M+1}$ and $\mathbf{G}_{M+1}$. Though this extra resources are much smaller than resources needed for finding a nonce, still some nodes may continue mining traditional blocks instead of an enhanced block. To prevent this, we
propose to use a punishment mechanism. For example, nodes may follow the rule that if a node has a block pool size satisfying \eqref{cons_pool_size}, then it accepts a new block if and only if this is an enhanced block.


\subsection{Block Pool Update}

After a node $j$ `mined' or validated the new $(M+1)$-th enhanced block, its block pool is updated as
\begin{align}
    \mathcal{P}_j \leftarrow  \mathcal{P}_j \backslash \mathcal{G}_{M+1}.
\end{align}
However, it should be noted that the blocks from $\mathcal{G}_{M+1}$ are not yet ready for encoding since the depth of the $(M+1)$-th enhanced block is less than $\alpha$, i.e., this enhanced block is not confirmed. 
It  could happen that the blockchain splits into several branches, and that different candidate $(M+1)$-th enhanced blocks are `mined' in more than one branch.
In this case, node $j$ will maintain pools $\mathcal{P}^f_j$ for each branch $f$ that contains a candidate $(M+1)$-th enhanced block. As soon as one of these candidate enhanced blocks is confirmed, say in branch $f^*$, the node $j$ assigns $\mathcal{P}_j = \mathcal{P}^{f^*}_j$ and discards the block pools in other branches. In this way the node keeps the pool of available blocks associated with the longest branch.

\subsection{Decentralized Systematic Raptor Code Encoding}

%
After the $m$-th enhanced block is confirmed, all nodes start to encode the blocks in $\mathcal{G}_m$ according to the procedure described in Section \ref{subsec_encoding}, but in a distributive manner. 
We note that, because of network propagation delays and different processing speeds, at a given time moment $t$ different nodes typically have different values of $W$. Let $\hat{B}(m)$ be the block sequence number of the $m$-th enhanced block. Let us assume that any node $j$ has $W$ such that  $W-\alpha = \hat{B}(m)$. Let us denote by
\begin{align}
    \mathcal{C}^m_j = \{ i: \exists j',  \mathcal{N}(\mathbf{v}^m_{j'})=\{i\} \},
\end{align}
the set of intermediate blocks that are stored by some node at a given moment $t$. We describe the way how node $j$ gets $\mathcal{C}^m_j$ later in this subsection.  
%
%
Note that it is possible  that $\mathcal{C}^m_j = \emptyset$. It is also possible that sets $\mathcal{C}^m_j$ are different for different nodes. 
Let us also define
\begin{align}
    \mathcal{S}^m_j = \{ 1, \ldots, n \} \backslash \mathcal{C}^m_j,
\end{align}
the set of intermediate blocks that are not stored by any nodes. 

As soon as $W-\alpha = \hat{B}(m)$, node $j$ starts encoding its own coded block for the $m$-th group as follows. 
\begin{itemize}
    \item Node $j$ first checks $\mathcal{S}^m_j$. If $\mathcal{S}^m_j = \emptyset$, then the node generates $d_j$ from the generator degree distribution $\Omega(d)$, and a set $\mathcal{N}(\mathbf{v}^m_j)$, and computes $\mathbf{v}^m_j$ according to \eqref{equ_LT_encode}. If $\mathcal{S}^m_j \neq \emptyset$, the node takes any $i_j \in \mathcal{S}^m_j$, assigns $\mathcal{N}(\mathbf{v}^m_j) =\{i_j\}$, and broadcasts the message ``I store intermediate block $\mathbf{u}^m_{i_j}$ with a timestamp $t_j$'' across the network.
    
    \item Upon receiving this message, each node, say node $j'$, updates its $\mathcal{C}^m_{j'} \leftarrow \mathcal{C}^m_{j'} \cup \{i\} $ and $\mathcal{S}^m_{j'}$. However, if node $j$ receives $i_{j'} = i_j$ from node $j'$ with timestamp $t_{j'} < t_j$, this means node $j'$ decided to store $\mathbf{u}^m_{i_j}$ earlier than node $j$. In this case node $j$ updates $\mathcal{S}^m_{j}$ and $\mathcal{C}^m_j$, and starts the encoding process from scratch. 
    
\end{itemize}


All network nodes follow the above procedure. As a result, this procedure ensures that each systematic coded blocks $\{\mathbf{v}^m_{1}, \ldots, \mathbf{v}^m_{n} \} = \{\mathbf{u}^m_1, \ldots, \mathbf{u}^m_n \}$ is stored at one and only one node. 
Moreover, this procedure also ensures that the degree distribution of the non-systematic coded blocks $\{\mathbf{v}^m_{n+1}, \ldots, \mathbf{v}^m_{n'} \}$ follow the given degree distribution $\Omega(d)$. 
It is possible that some nodes storing the systematic coded block leave the network. 
Moreover, it is possible that malicious nodes pretend to have such systematic blocks, but cannot provide them to honest nodes.
In the next section, we present a repair procedure to dynamically maintain the systematic property.

\section{Network Maintenance Algorithm \label{sec_NMA} }

%

In this section, we present our network maintenance algorithm (NMA). When a new node $j$ joins the network, it first attempts to encode a new non-systematic coded block by obtaining intermediate blocks from other nodes in the network. If it fails, i.e., one or more intermediate blocks are not stored by any node in the network, node $j$ then tries to repair and store a missing intermediate block. 

\subsection{Encoding a Non-systematic Coded Block}
Let $\mathcal{M}$ be the set of enhanced blocks such that all groups of blocks $\mathcal{G}_m, \forall m \in \mathcal{M}$ are encoded. 
When a new node $j$ joins the network, it first copies the original uncoded blocks from other nodes. It then calculates the coded block $\mathbf{v}^m_j$ to store for each $m \in \mathcal{M}$. In particular, for each group $m \in \mathcal{M}$, the node $j$ generates a degree $d^m_j$ from $\Omega(d)$ and forms $\mathcal{N}(\mathbf{v}^m_j)$. 
It then broadcasts across the network a request for blocks with indices from $\mathcal{N}(\mathbf{v}^m_j)$. Upon receiving this request, a node $j'$ will send to node $j$ an index $i$ if $\mathcal{N}(\mathbf{v}^m_{j'}) = \{i\}$ 
for any $i \in \mathcal{N}(\mathbf{v}^m_j)$, to inform node $j$ 
%
that the intermediate block $\mathbf{u}^m_i$.
Node $j$ then assigns $\phi(\mathbf{u}^m_i) = 1$ to indicate the availability of $\mathbf{u}^m_i$, otherwise $\phi(\mathbf{u}^m_i) = 0$. 
Hence, for each group $m \in \mathcal{M}$, node $j$ is able to construct a set $\mathcal{R}^m_j$ of missing intermediate blocks
\begin{align}
    \mathcal{R}^m_j = \{i : \phi(\mathbf{u}^m_i) = 0, i \in  \mathcal{N}(\mathbf{v}^m_j) \}
\end{align}
%
%
If $\mathcal{R}^m_j = \emptyset$, then node $j$ can obtain all blocks from $\mathcal{N}(\mathbf{v}^m_j)$ from other nodes and then it 
%
calculates and stores the coded block $\mathbf{v}^m_j$ according to \eqref{equ_LT_encode}. 
Otherwise, if $\mathcal{R}^m_j \neq \emptyset$ due to nodes storing the systematic coded blocks leaving the network or having bad network connection,
node $j$ will try to repair a block in $\mathcal{R}^m_j$. It starts by trying to repair one block from $\mathcal{R}^m_j$. 
During the process of repairing, the set $\mathcal{R}^m_j$ can be expanded with other missing intermediate blocks, as it is described in the next Section.


\subsection{Intermediate Block Repair \label{sec_block_repair} }

Node $j$ chooses random $i \in \mathcal{R}^m_j$ and tries to repair $\mathbf{u}^m_i$ using an edge coded block $\mathbf{v}^m_{j'}$, where $j' \in \mathcal{E}(\mathbf{u}^m_i)$, and $\mathbf{v}^m_{j'}$'s other neighboring intermediate blocks $\mathcal{N}(\mathbf{v}^m_{j'}) \backslash \{i\}$ similar to \eqref{equ_repair_1}, i.e.,
\begin{align}
\label{equ_repair_2}
    \mathbf{u}^m_i = \mathbf{v}^m_{j'} \oplus \sum_{h \in \mathcal{N}(\mathbf{v}^m_{j'}) \backslash \{i\}} \mathbf{u}^m_h, \text{ for any }  j' \in \mathcal{E}(\mathbf{u}^m_i).
\end{align}
%
%
%
However, the set of edges $\mathcal{E}(\mathbf{u}^m_i)$ and the set of neighbors $\mathcal{N}(\mathbf{v}^m_{j'})$ for any $j' \in \mathcal{E}(\mathbf{u}^m_i)$  are not known by node $j$.
For this reason, node $j$ tries to form a subset $\hat{\mathcal{E}}(\mathbf{u}^m_i) \subseteq  \mathcal{E}(\mathbf{u}^m_i)$. Initially, it sets up $\hat{\mathcal{E}}(\mathbf{u}^m_i)  = \emptyset$ and broadcasts $i$ across the network. 
%
When another node $x$ receives this message, it will reply to node $j$ with $\mathcal{N}(\mathbf{v}^m_{x})$ if $i \in \mathcal{N}(\mathbf{v}^m_{x})$. 
Then node $j$ adds $x$ into the set $\hat{\mathcal{E}}(\mathbf{u}^m_i)$. 
%
%
Note that, if a node $j' \in \mathcal{E}(\mathbf{u}^m_i)$ has bad network connection and not replies to $j$, then node $j'$ is considered as missing and we have $j' \in  \mathcal{E}(\mathbf{u}^m_i) \backslash \hat{\mathcal{E}}(\mathbf{u}^m_i)$.
After obtaining the set $\hat{\mathcal{E}}(\mathbf{u}^m_i)$, node $j$ checks the availabilities of the intermediate blocks $\mathcal{N}(\mathbf{v}^m_{j'}) \backslash \{i\}$ for each $j' \in \hat{\mathcal{E}}(\mathbf{u}^m_i)$.
%
If there is  $j' \in \hat{\mathcal{E}}(\mathbf{u}^m_i)$ such that all blocks from $\mathcal{N}(\mathbf{v}^m_{j'}) \backslash \{i\}$ are available,
%
then $\mathbf{u}^m_i$ can be repaired according to \eqref{equ_repair_2}. Otherwise, node $j$ adds the missing intermediate blocks into $\mathcal{R}^m_j$, i.e., 
\begin{align}
    \mathcal{R}^m_j \leftarrow \mathcal{R}^m_j \cup \{i' : \phi(\mathbf{u}^m_{i'}) = 0, i' \in \mathcal{N}(\mathbf{v}^m_{j'}) \backslash \{i\}, \nonumber \\
    j' \in \hat{\mathcal{E}}(\mathbf{u}^m_i)\},
\end{align}
and also adds the index $i$  of $\mathbf{u}^m_i$ to the set $\mathcal{Q}^m_j$ of not repairable symbols.
Next node $j$ tries to repair a block from $\mathcal{R}^m_j \backslash \mathcal{Q}^m_j$. 


\begin{figure}[t]
\centering
   \includegraphics[scale = 0.5]{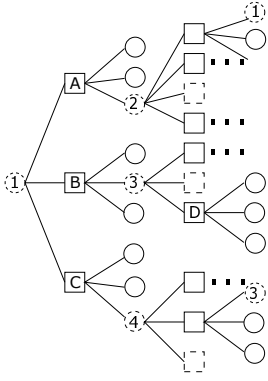}
  \caption{An example of expanded tanner graph with $d=4$ rooted at $\mathbf{u}^m_1$. Solid/dash circles denote available/missing information symbols, solid/dash squares denote available/missing parity check symbols. }
  \label{tanner_graph}
\end{figure}

Figure \ref{tanner_graph} illustrates the intermediate block repair process through an example, where we assume that node $j$ generates $\mathcal{N}(\mathbf{v}^m_{j})$ and notices that $\{1\} \in \mathcal{N}(\mathbf{v}^m_{j})$, but $\mathbf{u}^m_1$ is not available, while other blocks from $\mathcal{N}(\mathbf{v}^m_{j})$ are available.  Node $j$ then forms $\mathcal{R}^m_j = \{1\}$. 
Next it contacts other nodes and constructs the set $\hat{\mathcal{E}}(\mathbf{u}^m_1) = \{A, B, C \}$. 
%
However, the intermediate blocks with indices $\{2\} \in \mathcal{N}(\mathbf{v}^m_{A})$, $\{3\} \in \mathcal{N}(\mathbf{v}^m_{B})$ and $\{4\} \in \mathcal{N}(\mathbf{v}^m_{C})$ are not available. Hence, node $j$ updates $\mathcal{R}^m_j \leftarrow \{1\} \cup \{2, 3, 4 \}$ and constructs the set $\mathcal{Q}^m_j = \{1\}$. Next, node $j$ attempts to repair any of the intermediate block $\mathbf{u}^m_2$, $\mathbf{u}^m_3$ and $\mathbf{u}^m_4$. In this example, it successfully repairs $\mathbf{u}^m_3$ and stores the coded block $\mathbf{v}^m_j = \mathbf{u}^m_3$. Hence, the subsequent joining nodes can obtain $\mathbf{u}^m_3$ from node $j$ to repair $\mathbf{u}^m_4$. 

If  $\mathcal{Q}^m_j = \mathcal{R}^m_j$, this means that node $j$ is not able to repair an intermediate block in the current network.  Then node $j$ collects coded blocks from $(1+\epsilon)k_m$ different nodes and decodes the original blocks using the procedure described in Section \ref{sec_coding_scheme}.
After that, it stores one intermediate block $\mathbf{v}^m_j = \mathbf{u}^m_i$ for any $i \in \mathcal{R}^m_j$.

It should be noted that in the above proposed NMA algorithm, new nodes generate  degrees $d$ with distribution $\Omega^*(d) \neq \Omega(d)$. The distribution $\Omega^*(d)$ depends on the history of the network changes from the moment of encoding group $m$ and up to the moment of a new node joining.
%
%
%
Recall that $n_m$ is the number of intermediate blocks at the moment of encoding group $m$ (see Section \ref{subsec_encoding}).
Let $n^*_m$ be the number of nodes storing unique intermediate block for the $m$-th group at the moment when a new node $j$ joins the network. Then $\Omega^*(d)$ is given by the following lemma.

\begin{Lemma}
\label{lemma1}
    The actual distribution $\Omega^*(d)$ resulted from the NMA algorithm is 
    \begin{align}
    \label{equ_dist_case}
        \Omega^*(d) = 
        \begin{cases}
        \Omega(d)  g(n^*_m, d), \quad & \text{for } d \ge 2, \\
        \sum_{d'=2}^{d'=n_m} \Omega(d') (1 -  g(n^*_m, d')), \quad & \text{for } d=1,
        \end{cases}
    \end{align}
    where
    \begin{align}
        g(n^*_m, d) =  \binom{n^*_m}{d} / \binom{n_m}{d}. \label{equ_18}
    \end{align}
\end{Lemma}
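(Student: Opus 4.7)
The plan is to reason separately about what happens after a new node $j$ draws an initial degree $d'$ from the target distribution $\Omega(d)$, and then argue that the NMA procedure effectively reroutes some mass from $d' \geq 2$ to the degree-$1$ (systematic) outcome whenever neighbor availability fails.

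First I would formalize the two outcomes of the NMA procedure for a given group $m$. A newly joined node $j$ samples $d'$ with probability $\Omega(d')$, then picks a subset $\mathcal{N}(\mathbf{v}^m_j)$ of size $d'$ uniformly at random from $\{1,\ldots,n_m\}$. The description in Section \ref{sec_NMA} shows that node $j$ finalizes a non-systematic coded block of degree exactly $d'$ if and only if every index in $\mathcal{N}(\mathbf{v}^m_j)$ is currently stored at some node, i.e.\ $\mathcal{R}^m_j = \emptyset$; otherwise it enters the repair routine and, upon success, stores some systematic intermediate block, producing a coded block of effective degree $1$. Since the paper assumes $\Omega(1)=0$, this is the only way the final degree can equal $1$.

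Next I would compute the availability probability. Conditioned on the draw $d'$, the $d'$ neighbors are chosen uniformly among all $\binom{n_m}{d'}$ subsets of size $d'$. Among the $n_m$ possible intermediate-block indices, exactly $n^*_m$ correspond to blocks currently stored at some active node. The event "all $d'$ neighbors are available" is exactly the event that the uniformly chosen subset lies entirely inside this set of $n^*_m$ indices, whose probability is $\binom{n^*_m}{d'}/\binom{n_m}{d'} = g(n^*_m,d')$, matching \eqref{equ_18}. Multiplying by $\Omega(d')$ gives the first case of \eqref{equ_dist_case} for $d \geq 2$.

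Finally, for $d=1$ I would use the law of total probability over the initial draw $d'$. The effective degree equals $1$ precisely when, for some $d' \geq 2$ sampled with probability $\Omega(d')$, at least one of the chosen neighbors was unavailable; this complementary probability is $1 - g(n^*_m,d')$. Summing over $d'=2,\ldots,n_m$ yields the second case of \eqref{equ_dist_case}. The main subtlety, and the place I would be most careful, is justifying that the repair step always terminates in storing a systematic block (never a higher-degree one) and that the conditional uniformity of the neighbor subset is preserved under the sampling rule in Section \ref{sec_NMA}; once this is established the hypergeometric counting is immediate and the distribution in \eqref{equ_dist_case} follows. A sanity check that $\sum_d \Omega^*(d) = 1$, obtained by combining both cases with $\Omega(1)=0$, can serve as a final consistency verification.
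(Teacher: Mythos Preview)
Your proposal is correct and follows essentially the same argument as the paper: compute the hypergeometric probability $g(n^*_m,d)$ that all $d$ uniformly chosen neighbors lie among the $n^*_m$ available intermediate blocks, and route the complementary mass to $d=1$ since the repair/decode branch always ends with storing a single intermediate block. Your treatment is in fact slightly more careful than the paper's, as you flag the need to justify that the repair outcome is always degree~$1$ and verify normalization, points the paper leaves implicit.
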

\begin{proof}

A new node $j$ generates degree $d$ with probability  $\Omega(d)$ and chooses $d$ nodes among $n_m$ nodes according to the uniform distribution. The probability that these $d$ nodes are among the available $n^*_m$ nodes is 
\begin{align}
        \Pr(d \text{ nodes are available}) = \binom{n^*_m}{d} / \binom{n_m}{d}.
\end{align}
Thus, we get the first line of \eqref{equ_dist_case}. 

The probability that at least one of the chosen $d$ nodes is not available is 
\begin{align}
        & \Pr(\text{some of $d$ nodes are not available}) \nonumber \\\
        & = 1 -  \binom{n^*_m}{d} / \binom{n_m}{d}.
\end{align}
%
In this case, node $j$ starts repairing/decoding and further stores an intermediate block, which means that it stores a block of degree 1.  Thus, we obtain the second line of \eqref{equ_dist_case}.

\end{proof}

  
%

We now examine the computational complexity of the encoding, repairing and decoding procedures. If the new joining node successfully obtains the needed $d \sim \Omega(d)$ intermediate blocks, the expected encoding complexity is $\mathcal{O}(\mathbb{E}[\Omega(d)])$. 
If the new node needs to repair a missing intermediate block by using \eqref{equ_repair_2} with a coded block $\mathbf{v}^m_j$, the computational complexity is $\mathcal{O}(|\mathcal{N}(\mathbf{v}^m_j ) |)$. The number of neighbors $|\mathcal{N}(\mathbf{v}^m_j ) |$, which is the degree of $\mathbf{v}^m_j$, follows the actual distribution $\Omega^*(d)$. From Lemma~\ref{lemma1}, we see that 
\begin{align}
    \Omega^*(d) \le \Omega(d), \quad \forall d \ge 2. 
\end{align}
Hence, the expected computational complexity for repairing is $\mathcal{O}(\mathbb{E}[\Omega^*(d)]) \le \mathcal{O}(\mathbb{E}[\Omega(d)])$. 
When the new node needs to decode all blocks in the $m$-th group, the computational complexity is $ \mathcal{O}(k_m)$ \cite{shokrollahi2006raptor}. 
Note that the decoding complexity is typically significantly larger than the complexity of encoding and repairing.
For example in our simulations, encoding and repairing typically involve dozens of blocks, while decoding involves thousands of blocks depending on the group size.  
Hence, it is computationally expensive if all new joining nodes have to decode all groups, as is the case in \cite{kadhe2019sef}. On the other hand, by using our approach, the probability that a new joining node has to decode one group, rather than simply encode or repair, is less than $6.3 \times 10^{-5}$, based on simulation results in the Bitcoin blockchain network (see Section \ref{sec_sim_bitcon}).

\section{Determining Group Size $k_m$ \label{sec_k}}

As will be explained later, our goal is to choose $k_m$ as large as possible, but so that after $\gamma$ time epochs the original blocks from $\mathcal{G}_m$ can be restored with a decoding failure probability less than a target failure probability $\zeta$. 
A reliable system typically requires the failure probability be less than $10^{-9}$ \cite{li2010realistic}.
%
%
%
%
%
%
%
In our proposed rateless coded blockchain with a set of $\mathcal{M}$ encoded groups, each node stores only $|\mathcal{M}|$ coded blocks and the blocks that have not been encoded yet. 
%
The extra storage for the enhanced block headers, i.e., the hash values of the non-systematic intermediate blocks, is negligible\footnote{ For example, when a group has $k=10000$ blocks and $r = 0.8$, there are 2500 hash values to store in the enhanced block header and needs 0.076 MB, which is negligible as compared with more than 9.76 GB space saving when all blocks are 1 Mb. }.
Hence, we define the {\em storage reduction coefficient} as follows.
%
%
%
\begin{Definition}
The storage reduction coefficient $R_s$ is defined as the ratio of the space needed by an individual node in the proposed rateless coded blockchain over such space in a replicated blockchain, i.e., 
    \begin{align}
    {R}_s = \frac{W - \sum_{m \in \mathcal{M}} k_m + |\mathcal{M}|}{W}. \label{equ_storage}
    \end{align}
\end{Definition}
%
The total number of encoded blocks is $\sum_{m \in \mathcal{M}} k_m$.
%
Hence, the larger are $k_m$'s, the smaller is $|\mathcal{M}|$, and therefore the smaller is the storage reduction coefficient. Thus we would like $k_m$'s to be large. 
%
On the other hand, if $k_m$ is too large and some nodes leave the network, we may not be able to recover (decode) some groups. 
%
To find a good trade-off, we define the decoding failure probability as follows.
%
%
%
%
%

%

\begin{Definition}
    Given the degree distribution $\Omega(d)$, the rate of nodes leaving and joining the network $\lambda_l$ and $\lambda_e$, respectively, and initial number of nodes $N$, the decoding failure probability $f(k, N; \Omega(d), \lambda_l, \lambda_e, \gamma)$ is the probability that a node fails to decode a group of blocks after $\gamma$ time epochs since the group encoding.
\end{Definition}

\begin{figure}[t]
\centering
   \includegraphics[scale = 0.5]{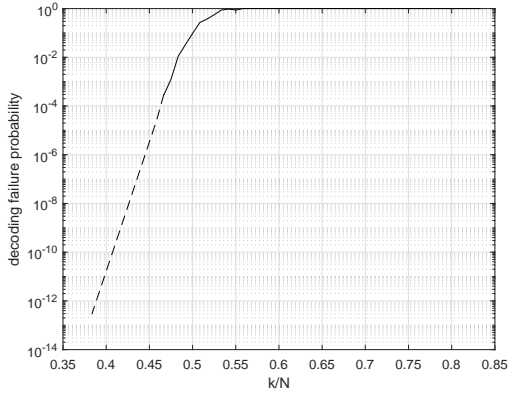}
  \caption{Decoding failure probability of $f(k, 3000; \Omega(d), 12, 4,  100)$. Solid line indicates the failure probability from simulation and dashed line indicates the predicted failure probability. }
  \label{fitting_12_4}
\end{figure}

Let $ p(\mathbf{G}, \Omega^*(d), \epsilon)$ be the probability that the raptor code fails to decode at least one information symbol 
with pre-code matrix $\mathbf{G}$, coded symbol degree distribution $\Omega^*(d)$ and overhead $\epsilon$.
Let us assume that $ p(\mathbf{G}, \Omega^*(d), \epsilon)$ is known to us, for example it can be derived using an analysis similar to that in \cite{shokrollahi2006raptor}, or it can be obtained via simulations. Then we have the following result. 

\begin{Lemma}
\label{lemma_2}
    The decoding failure probability is given by \eqref{equ_f}. 
\end{Lemma}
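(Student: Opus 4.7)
The plan is to express the decoding failure probability as an expectation over the random state of the network after $\gamma$ time epochs, where for each realization of the network state we invoke the raptor-code failure probability $p(\mathbf{G},\Omega^*(d),\epsilon)$ from the stated assumption. Since the decoder requires at least $(1+\epsilon)k$ distinct coded blocks drawn from the effective degree distribution, the challenge is (i) to determine how many coded blocks survive after $\gamma$ epochs, and (ii) to identify the mixture of degree distributions contributed by original miners versus later joiners.

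First I would set up the counting random variables. Let $L=\sum_{t=1}^{\gamma}l_t$ and $E=\sum_{t=1}^{\gamma}e_t$ denote the total number of nodes leaving and joining over $\gamma$ epochs; since the $l_t$ and $e_t$ are independent Poisson with means $\lambda_l$ and $\lambda_e$, $L\sim\mathrm{Poisson}(\gamma\lambda_l)$ and $E\sim\mathrm{Poisson}(\gamma\lambda_e)$. I would then argue that the number of active nodes after $\gamma$ epochs (holding a coded block for the given group) is $N'=N-L'+E$, where $L'\le L$ counts only departures of nodes that currently carry blocks for group $m$ — for large $N$ relative to $L$, and under the assumption that departures are uniform over nodes, this is well approximated by the raw counts. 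From this I would derive the number of surviving systematic blocks $n^*_m$ and the number of new non-systematic contributions.

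Next I would invoke Lemma 1 conditionally on the realization of $(L,E)$ (hence of $n^*_m$) to obtain the effective degree distribution $\Omega^*(d\mid n^*_m)$ mixed across the $E$ new joining nodes, and note that the surviving original intermediate blocks simply contribute degree-$1$ symbols. Conditional on this state, the probability of decoding failure is $p(\mathbf{G},\Omega^*(d\mid n^*_m),\epsilon)$ provided at least $(1+\epsilon)k$ blocks are collectable; when the surviving pool is smaller than $(1+\epsilon)k$, decoding fails with probability one. Summing (iterated expectation) over $L$ and $E$ weighted by the Poisson pmfs $e^{-\gamma\lambda_l}(\gamma\lambda_l)^L/L!$ and $e^{-\gamma\lambda_e}(\gamma\lambda_e)^E/E!$ yields the double series that should match the displayed formula \eqref{equ_f}.

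The main obstacle I anticipate is accurately modeling $n^*_m$ across epochs, since joining nodes may also \emph{repair} missing systematic blocks via the NMA procedure, so $n^*_m$ does not simply decrease monotonically with $L$. A clean proof would either (i) treat NMA repairs as contributing an extra degree-$1$ symbol for each new node that lands on a missing systematic index — so that the contribution of joiners splits, according to Lemma 1, into degree-$1$ repair symbols with probability $1-g(n^*_m,d)$ and degree-$d$ coded symbols with the complementary probability — or (ii) make a worst-case assumption that no repairs occur, giving an upper bound on the failure probability. I expect the authors take route (i) and evaluate the expectation sequentially epoch-by-epoch; the bookkeeping of $n^*_m$ as a Markov chain driven by the $(l_t,e_t)$ process is the only genuinely delicate step, while the plugging into $p(\mathbf{G},\Omega^*,\epsilon)$ and the Poisson averaging are routine.
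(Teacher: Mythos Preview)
Your plan is considerably more elaborate than what the paper actually does, and it misses the one simplification that produces the displayed formula. The paper's argument is essentially two lines: write
\[
f(k,N;\Omega(d),\lambda_l,\lambda_e,\gamma)=\int_0^\infty p(\mathbf{G},\Omega^*(d),\epsilon)\,\Pr\bigl(N+\Delta_\gamma=(1+\epsilon)k\bigr)\,d\epsilon,
\]
where $\Delta_\gamma$ is the \emph{net} change in the number of nodes after $\gamma$ epochs, and then observe that $\Delta_\gamma$ is the difference of two independent Poisson totals and hence follows a Skellam distribution; substituting the Skellam pmf gives \eqref{equ_f}. In particular, the paper never conditions on $(L,E)$ separately, never tracks $n^*_m$ epoch by epoch, and never invokes Lemma~1 inside this proof: $\Omega^*(d)$ is treated as a fixed input to the black-box $p(\mathbf{G},\Omega^*(d),\epsilon)$, which is assumed known.

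So the gap in your proposal, relative to reproducing \eqref{equ_f}, is twofold. First, your double Poisson sum over $(L,E)$ will not collapse to the single-variable Skellam expression unless you recognize that only the net change $E-L$ enters; that recognition \emph{is} the proof. Second, the complications you anticipate---the dependence of $\Omega^*(d)$ on the realized $n^*_m$, the NMA repair dynamics, the Markov-chain bookkeeping---are precisely what the paper abstracts away by declaring $p(\mathbf{G},\Omega^*(d),\epsilon)$ to be given. Your instinct that these issues deserve attention is sound (indeed, treating $\Omega^*(d)$ as realization-independent is an approximation), but for the purpose of deriving \eqref{equ_f} as stated you should drop that machinery and simply condition on the Skellam-distributed net node count.
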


    
\begin{proof}
    It is not difficult to see that 
    \begin{align}
    \label{equ_25}
        & f(k, N; \Omega(d), \lambda_l, \lambda_e, \gamma) \nonumber \\
        & = \int_{0}^{\infty} p(\mathbf{G}, \Omega^*(d), \epsilon) \Pr(N + \Delta_{\gamma} = (1+\epsilon) k ) \, d \epsilon,
    \end{align}
    where $\Delta_{\gamma}$ is the sum of two Possion processes that follows a Skellam distribution  \cite{karlis2006bayesian}, i.e., 
    \begin{align}
        & \Pr(\Delta_{\gamma} = s) \nonumber \\
        & = e^{-\gamma (\lambda_l + \lambda_e)} \binom{\lambda_l}{\lambda_e}^{\frac{s}{2}} \sum_{X=0}^{\infty}\frac{1}{X!(X + s )!} (\gamma \sqrt{\lambda_l\lambda_e} )^{2 X + s} \label{equ_skellam}
    \end{align}
    Hence, by substituting \eqref{equ_skellam} into \eqref{equ_25}, we obtain equation \eqref{equ_f}. 
    
    \begin{align}
    \label{equ_f}
         &f(k, N; \Omega(d), \lambda_l, \lambda_e, \gamma) \nonumber \\ 
         &= \int_{0}^{\infty} p(\mathbf{G}, \Omega^*(d), \epsilon) e^{-\gamma (\lambda_l + \lambda_e)} \binom{\lambda_l}{\lambda_e}^{\frac{(1 + \epsilon)k - N}{2}} \times \nonumber \\
         & \sum_{X=0}^{\infty}\frac{1}{X!(X + (1 + \epsilon)k - N )!} (\gamma \sqrt{\lambda_l\lambda_e} )^{2 X + (1 + \epsilon)k - N} \, d \epsilon.
    \end{align}

\end{proof}

As we explain below, for finding an appropriate value of $k_m$ we use the function $f(k, N; \Omega(d), \lambda_l, \lambda_e, \gamma)$, which can be obtained in several possible ways. One way is to use analysis similar to the one presented in \cite{shokrollahi2006raptor} for first obtaining values $ p(\mathbf{G}, \Omega^*(d), \epsilon)$  and next using Lemma \ref{lemma_2}. Another way is to obtain $f(k, N; \Omega(d), \lambda_l, \lambda_e, \gamma)$ via simulations, which requires some efforts since obtaining small values of $f(k, N; \Omega(d), \lambda_l, \lambda_e, \gamma)$ may take  too long simulation time.  Several methods for resolving this problem have been proposed by the coding theory community, see for example  \cite{richardson2003error} and references therein. The main goal of our work is not coding theory, however, but design of new blockchain. For this reason, in this work, we simply  
assume that $f(k, N; \Omega(d), \lambda_l, \lambda_e, \gamma)$ is available to us. For our numerical examples considered in the next Section, we generate $f(k, N; \Omega(d), \lambda_l, \lambda_e, \gamma)$ by simulating it up to $10^{-4}$ and extrapolating it further for smaller values. 
For example, when $\lambda_l=12$, $\lambda_e = 4$, $N=3000$, $\gamma=100$, and $\Omega(d)$ follows the generator degree distribution in Section \ref{sec_simulation}, the decoding failure probability is shown in Fig. \ref{fitting_12_4}.

Recall that our goal is to determine $k_m$ at time epoch $t_m$ when the number of nodes in the network is $|\mathcal{V}_{t_{m}}|$, so that the decoding failure probability after $\gamma$ time epoch when network changes to $|\mathcal{V}_{t_m + \gamma}|$ as a result of nodes leaving and joining the network, is no higher than $\zeta$.
%
%
It should be noted that $f(k, N; \Omega(d), \lambda_l, \lambda_e, \gamma)$ becomes a function of $\frac{k}{N}$ when $N \rightarrow \infty$. 
For finite $N$, the probability $f(k, N; \Omega(d), \lambda_l, \lambda_e, \gamma)$ increases as $N$ decreases with a fixed ratio of $\frac{k}{N}$, see simulation results in Fig. \ref{failure_rate}. 
%
%
%
%
%
%
%
%
It should be also noted that the blocks in group $\mathcal{G}_m$ will be encoded after the $m$-th enhanced block is confirmed.  
Recall that $\beta$ is the number of blocks generated in each time interval, which is a constant on average, and $\alpha$ is the required depth for a block to be confirmed.
Therefore, we calculate $k_m$ by table lookup according to the following: 
%
\begin{align}
\label{equ_27}
f(k_m, | {\cal V}_{t_m} |; \Omega(d), \lambda_l, \lambda_e, \gamma + \frac{\alpha}{\beta}) = \zeta
\end{align}
where 
$\frac{\alpha}{\beta}$ is the number of time epochs needed for the $m$-th enhanced block to be confirmed, and $\zeta$ is the target failure probability.

If at a moment $t' \ge t_m + \gamma$ we have that $|\mathcal{V}_{t'}| < |\mathcal{V}_{t_{m}}|$, then we cannot guarantee  that at time $t' + \gamma$ we will have decoding failure probability smaller than $\zeta$.
In this case each node $j$ puts the blocks in $\mathcal{G}_m$ back to the pool $\mathcal{P}_j$, and proceeds further in the standard way with 
one exception - if node $j$ `mines' the next enhanced block, it assigns to it the sequence number $m$, instead of $M+1$.   
%
%
This new $m$-th enhanced block overwrites the old one and its `mining' and verification processes are the same as those for a new `mined' enhanced block. Thus, if more than one enhanced blocks have the same sequence number, only the latest one is valid.
If malicious nodes deliberately `re-mine' an enhanced block with sequence number $m^*$ that not need to be `re-mined', all honest nodes will not accept this enhanced block since the blocks $\mathcal{G}_{m^*}$ are not in their block pools.
The new joining node runs the NMA algorithm in the backward direction - from the latest enhanced block in the longest blockchain branch toward to the genesis block. If the new node bumps to an enhanced block with a sequence number, say $m$, that it already met earlier in this backward process, then the new node simply ignores this enhanced block.

It should be noted that our scheme increases communication overhead when an enhanced block `miner' needs to estimate the the number of nodes in the network. In order to obtain the number of nodes, we can employ the `ping’ and `pong’ mechanism  used in the existing blockchains for checking whether a node is alive \cite{pingpong}. We can only count the nodes that recently broadcast honest messages, such as valid blocks or transactions, to avoid fake connections. Each node can broadcast an alive message every 24 hour, which should be enough since in the modern blockchains the number of nodes typically varies quite slowly. For instance, in the bitcoin network, the average number of joining and leaving nodes per day are 42.18 and 43.16 respectively, see Section VII.C. Hence the percentagewise the increased communication overhead for estimating network size is very smaller compared to the typical network load\footnote{A typical `ping' and `pong' message have 32 byte, which increases network communication overhead $32 \times N^2 / 24$ bytes per hour. A Bitcoin block has $10^6$ bytes and is generated every 10 minutes, which results in a typical network load of $10^6 \times N \times 6$ bytes per hour. Hence the increased communication overhead is 0.22\% when $N=10,000$.}.

\section{Simulation Results \label{sec_simulation} }

Our simulations are conducted using Matlab 2020b on a laptop with an Intel Core i7 eight-core CPU @ 2.2 GHz.
In our simulations, we implemented the proposed rateless coded blockchain using the Reed-Solomon code as the pre-code. The generator matrix of the Reed-Soloman code is from Matlab 2020b Communication Toolbox. The generator degree distribution $\Omega(d)$ is as follows.
Let $S = c \ln (k/\delta) \sqrt{k}$, where $c$ and $\delta$ are constants, and define
\begin{align}
    &\tau(d) = 
    \begin{cases}
    \frac{S}{ik}, &  \text{for } d = 1, \ldots, \frac{k}{S} - 1, \\
    \frac{S \ln (S /\delta)}{k}, &  \text{for } d = \frac{k}{S},  \\
    0, &  \text{for } d = \frac{k}{S} + 1, \ldots, k,  
    \end{cases}\\
    &\rho(d) = 
    \begin{cases}
    \frac{1}{k}, &  \text{for } d = 1, \\
    \frac{1}{d(d-1)}, &  \text{for } d = 2, \ldots, k. \\
    \end{cases}
\end{align}
In \cite{luby2002lt} the following degree distribution is proposed for obtaining LT codes
\begin{align}
\label{equ_mu}
    \mu(d) = \frac{\tau(d) + \rho(d)}{\sum_{j=1}^k \Big(\tau(j) + \rho(j) \Big)}.
\end{align}
In our simulations, we set $c = 0.1$ and $\delta = 0.5$, which are known to lead to good and stable performance \cite{mackay2005fountain}.
As discussed in Section \ref{subsec_encoding}, a new node should use a degree distribution with $\Omega(1)=0$. For this reason we slightly modify $\mu(d)$ in \eqref{equ_mu} and set 
\begin{align}
    \Omega(d) = 
    \begin{cases}
    0, &  \text{for } d = 1, \\
    \mu(d) + \frac{\mu(1)}{k-1}, &  \text{for } d = 2, \ldots, k. \\
    \end{cases}
\end{align}

We first compare the LT and raptor codes in the dynamic network, and show the decoding failure probabilities $f(k, N; \Omega(d), \lambda_l, \lambda_e, \gamma)$ for several network scenarios.
Next, we evaluate the proposed rateless coded blockchain in a highly dynamic blockchain network where the number of nodes leaving is much higher than that for joining. 
Lastly, we show the performance of the proposed scheme in the real-world Bitcoin network.

\subsection{Code Performances}
We first study the minimum number of nodes needed for successful decoding  by using the LT and raptor codes respectively. 
%
%
We set a fixed length blockchain with only one group with $k=1000$, the network initially has $N=3000$ nodes, and $\lambda_l = 50$ and $\lambda_e = 40$ nodes per time epoch. 
We set the code rate of the pre-code as $r = 0.8$. 
The network is dynamic using the NMA algorithm. 
%
We generated 1000 blockchains and ran them until a new node decoding fails. We record the numbers of nodes needed in each of these networks for the latest successful decoding before the failure occur, and show their densities in Figure \ref{density_LT_RP}.  
We see that the raptor code significantly outperforms the LT code since it requires smaller number of nodes for successful decoding. 
We also see that sometimes the LT code needs more than
2400 nodes for decoding, 
which means that sometimes it needs very large, $\epsilon = 1.4$, decoding overhead. 
This is due to the error floor issue present in LT codes \cite{mackay2005fountain} \cite{hussain2011error}. 

\begin{figure*}[htbp]
  \centering
  \mbox{
    \subfigure[{ }\label{density_LT_RP}]
    {\includegraphics[scale = 0.41]{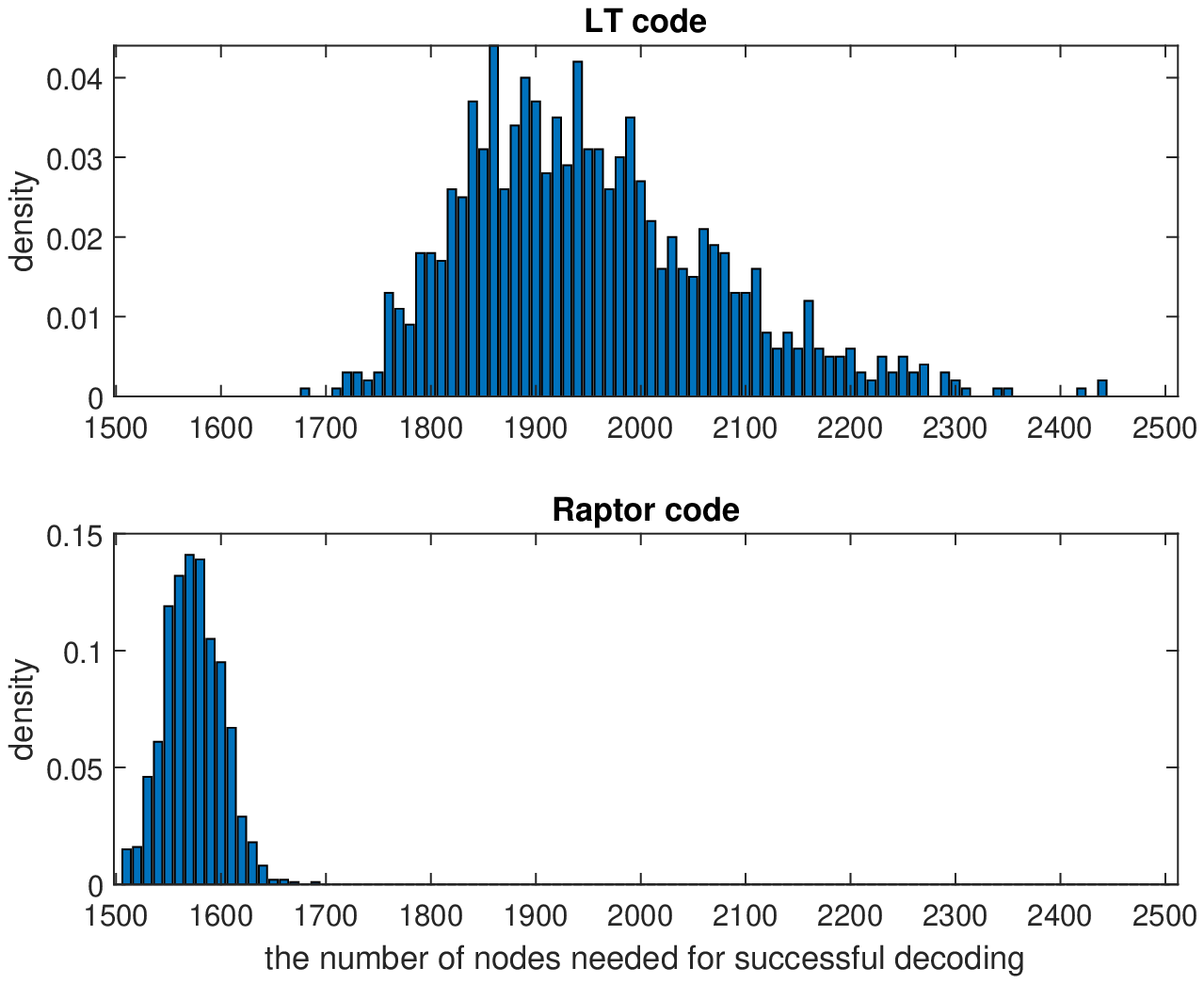}}\hspace*{-0.3em}
     \subfigure[{ }\label{failure_rate}]{\includegraphics[scale = 0.41]{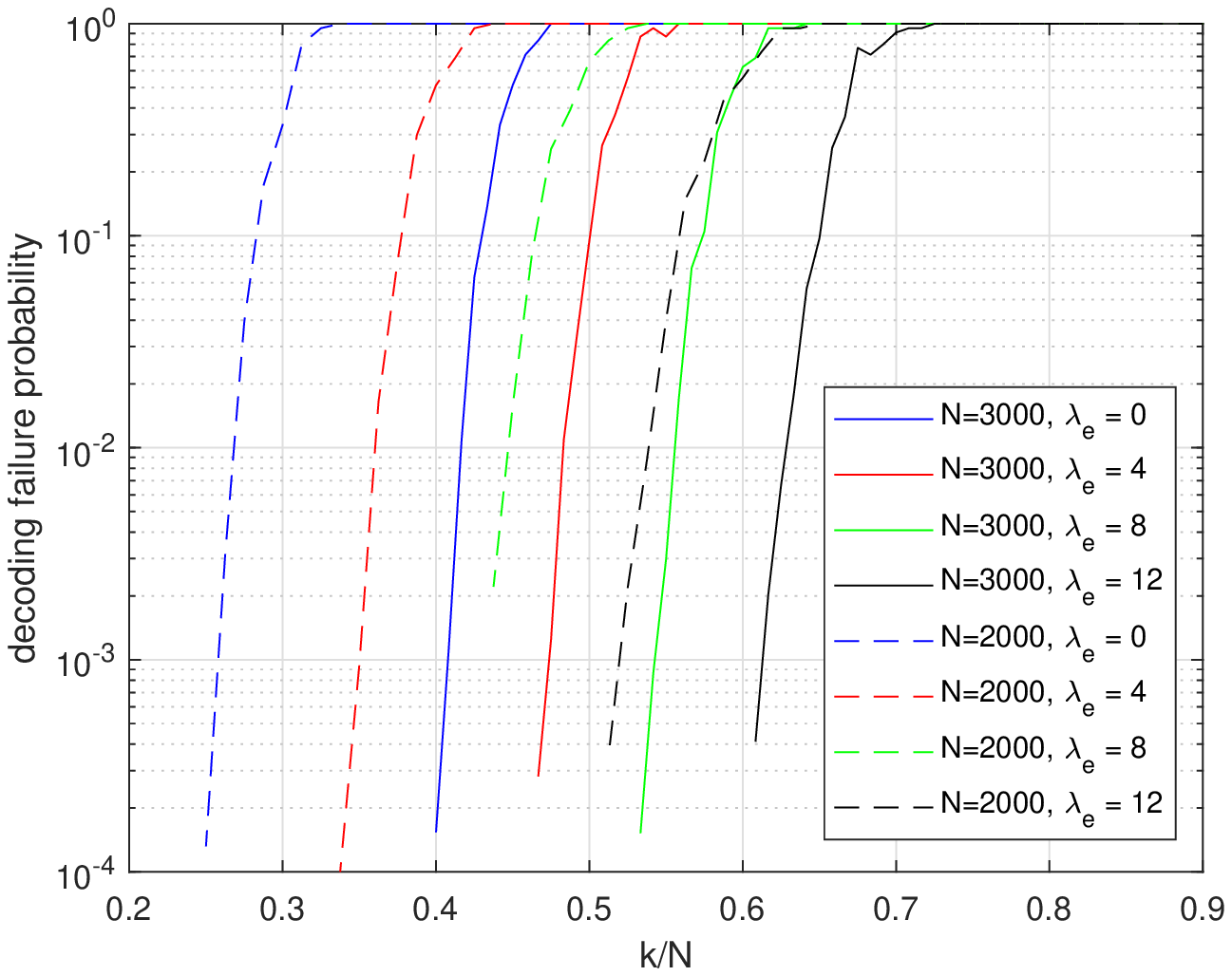}}\hspace*{-0.3em}
     \subfigure[{ }\label{sim_1}]{\includegraphics[scale = 0.41]{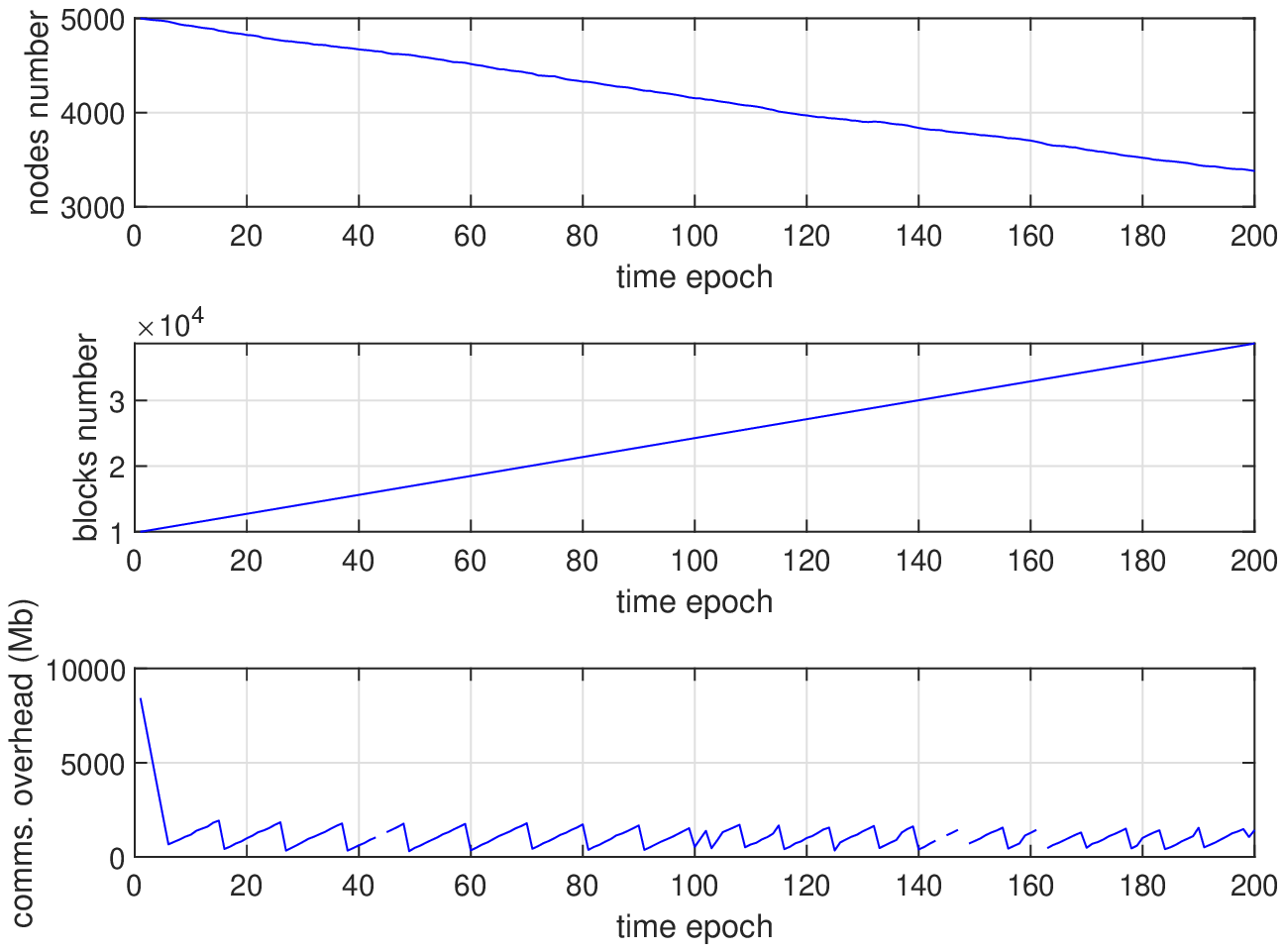}}
   }
  \caption{(a) Performance comparison of LT and Raptor codes in dynamic blockchain networks; (b) Decoding failure probability when $\gamma=100$, $\lambda_{l}=12$, $\lambda_{e}=\{0, 4, 8, 12 \}$; (c) Number of nodes, number of blocks and communication overhead (Mb) of the new joining node with time epoch growth. }
  \label{ThreeFigures2}
\end{figure*}
%


We next show the decoding failure probability $f(k, N; \Omega(d), \lambda_l, \lambda_e, \gamma)$  in Figure \ref{failure_rate}. We fix $\lambda_l = 12$, $\gamma=100$, and take $\lambda_e = \{0, 4, 8, 12 \}$ and $N=\{2000, 3000\}$. 
We keep the code rate of the pre-code as $r=0.8$. 
Note that when $\lambda_e = 0$, decoding blocks in the dynamic network is equivalent to decoding through a binary erasure channel (BEC) since nodes (symbols) leave the network, i.e., are erased, with uniform probability, and no repairing occurs.
From the result, we can see that BEC is the worst scenario in the sense that it leads to the minimum value of $k$ for a given decoding failure probability. 
%
%
This opens the possibility of obtaining $f(.)$ by using known methods, see for example \cite{shokrollahi2006raptor}, for analysis of raptor codes in BEC. This possibility will be explored in future works.  
%
The results also show that $f(.)$ increases as $N$ decreases. 


\subsection{Rapidly Reducing Network \label{subsec_rapid} }
It is instructive to consider a rapidly reducing network.
We set
$\lambda_l = 12$, $\lambda_e = 4$, $N=5000$, $\gamma = 98$, $\beta = 144$, $\alpha = 244$ and target decoding failure probability $\zeta = 10^{-12}$.
We assume that initially there are 5,000 nodes and 10,000 un-encoded blocks in the blockchain when implementing the proposed rateless coded blockchain approach. 
We assume that all blocks have a constant bit-size of 1 Mb. In order to clearly demonstrate the group size change, we assume that a maximum of 1 new enhanced block can be generated at each time epoch.  The network is run through 200 time epochs. 


Fig. \ref{sim_1} shows the number of network nodes and number of blockchain blocks at a given time epoch, and
the communication overhead needed by a new node for generating new parity check blocks or repairing intermediate blocks for previously encoded groups.  
The communication overhead is measured in the total number of bits that a new joining node needs.
The communication overhead of the traditional blockchain is the total bit-size of all blocks and it increases with the blockchain length. 
In contrast, the communication overhead of the rateless coded blockchain 
is less than 800 Mb when the total blockchain bit-size exceeds 38,000 Mb. 
The communication overhead alternatively increases and decreases because the blocks are consistently generated, and they are encoded only when the involved enhanced block is confirmed. 
%
It should be noted that the communication overhead increases with the number of encoded groups $|\mathcal{M}|$, and thus with the size $W$ of the blockchain. However, since $|\mathcal{M}|$ grows very slowly, the increment is not noticeable during the presented 200 time epochs.

In this experiment, 26 enhanced blocks are `mined'. In particular, the 3rd$\sim$13-th enhanced blocks are `mined' twice and the 1st$\sim$2nd enhanced blocks are `mined' thrice. The reason is that the number of nodes keeps decreasing, thus, the blocks in these groups need to be re-grouped and re-encoded to ensure the decoding failure probability meets the requirement. We show the time epochs when these blocks are `mined' and their $k_m$ values in Table \ref{tab_k_m}.
From  Table \ref{tab_k_m}, we see that the enhanced blocks are `mined' at a near constant interval of 12 time epochs after the initial stage at which there are a large pool of available blocks. 
Table \ref{tab_k_m} also shows that the enhanced blocks are re-generated or `re-mined' immediately after $\gamma$ time epochs and their values of $k_m$ decreases with the time epoch. This is because the number of nodes in the network is rapidly decreasing. 


\begin{table}[t]
    \centering
    \caption{Enhanced block generation and $k_m$ values. }
    \begin{tabular}{|c|c|c|c|c|}
    \hline
        seq.  & 1 & 2 & 3 $\sim$ 13 & 14 $\sim$ 26   \\ \hline
        1st `mined'  & 1 & 2 & 3 $\sim$ 95 & 102 $\sim$ 198   \\ \hline
        $k_m$ & 1910 & 1906 & 1904 $\sim$ 1613 & 1593 $\sim$ 1287\\ \hline
        2nd `mined' & 100 & 101 & 102 $\sim$ 194 & - \\ \hline
        $k_m$ & 1600 & 1595 & 1593 $\sim$ 1301 & -\\ \hline
        3rd `mined' & 199 & 200 & - & -  \\ \hline
        $k_m$ & 1284 & 1279 & - & - \\ \hline
    \end{tabular}
    \label{tab_k_m}
\end{table}

\begin{figure}[h]
\centering
   \includegraphics[scale = 0.45]{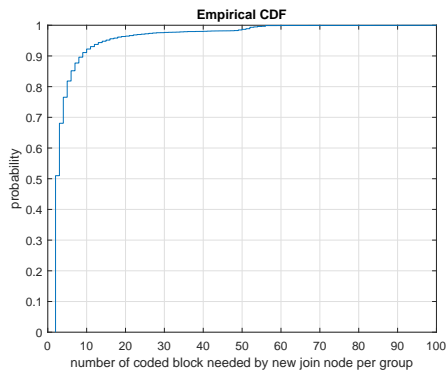}
  \caption{The CDF of the number of coded blocks that a new joining node needs to collect from other nodes in order to generate its parity or intermediate block for a given group. }
  \label{CDF_coded_block}
\end{figure}

Fig. \ref{CDF_coded_block} shows the CDF of the number of coded blocks, which can be systematic or non-systematic ones, that a new joining node needs to collect from other nodes in order to generate its own  parity or intermediate block for a given group.
%
We can see that with more than 90\% probability,  the new joining node only needs to collect 10 blocks.
This is significantly lower than other coded blockchains proposed in \cite{perard2018erasure}, \cite{wu2020distributed} and \cite{kadhe2019sef}. In these works a new node needs to collect more than $k_m$ blocks for group $G_m$, and as we saw $k_m$ is typically much larger than 10.  



It should be noted that this section simulates an extreme scenario where nodes consistently and rapidly leaving the network. In this case, existing nodes need to re-encode to ensure the blockchain is decodable, thus, increase the overall communication overhead. However, in practice, a typical blockchain network such as Bitcoin has a relative static number of nodes, e.g., the number of nodes in Bitcoin network is around 10,000 for many years, see in Section VII.C.
Moreover, in practice, we can employ some simple strategies to reduce the frequency of re-encoding. For instance, we can introduce an coefficient $r < 1$ in \eqref{equ_27} as $f(k_m, r| {\cal V}_{t_m} |; .) = \zeta$. Then `re-mining' an enhanced block will only occur when $|\mathcal{V}_{t'}| < r|\mathcal{V}_{t_{m}}|$.
In Bitcoin network, we can set $r=0.7$, which means `re-encoding' a group only occurs when the number of nodes is reduced by 30\%. 
%
As far as we know, this has not happened in Bitcoin history since 2015 \cite{bitcoin_number_2015}, and is not expected in future. 

\subsection{Performance in Bitcoin Network \label{sec_sim_bitcon} }

We evaluate our rateless coded blockchain using the real-world Bitcoin data obtained in \cite{bitcoin_data} from Nov. 18, 2018 to Nov. 13, 2020. We assume that our approach was implemented into the Bitcoin blockchain on Nov. 18, 2018. 
The initial and the latest blockchain lengths are 551,685 and 656,805 blocks, respectively. 
%
We assume that each time epoch is one day, i.e., 24 hours, hence we have $\beta = 144$, since one block is generated per 10 minutes on average \cite{nakamoto2008bitcoin}. 
We then assume that all blocks are fully loaded with bit-size of 1 Mb. 
From \cite{bitcoin_data} we obtain that nodes are leaving and joining the network as Possion processes with $\lambda_l = 42.18$ and $\lambda_e = 43.16$ respectively, and that the average number of nodes is 10,000.
We set $\alpha=144$ \cite{bitcoin_wiki}, $\zeta = 10^{-12}$ and $\gamma = 98$ (roughly 3 months). 
%
%
We omit the bit-size of block headers (latest about 50 Mb), which is negligible as compared with the total block data.
%
We define the {\em communication reduction coefficient} as the ratio of the communication overheads of the proposed rateless coded blockchain and the traditional blockchain.

\begin{figure*}[htbp]
  \centering
  \mbox{
    \subfigure[{ }\label{bitcoin_plot_ratio}]
    {\includegraphics[scale = 0.41]{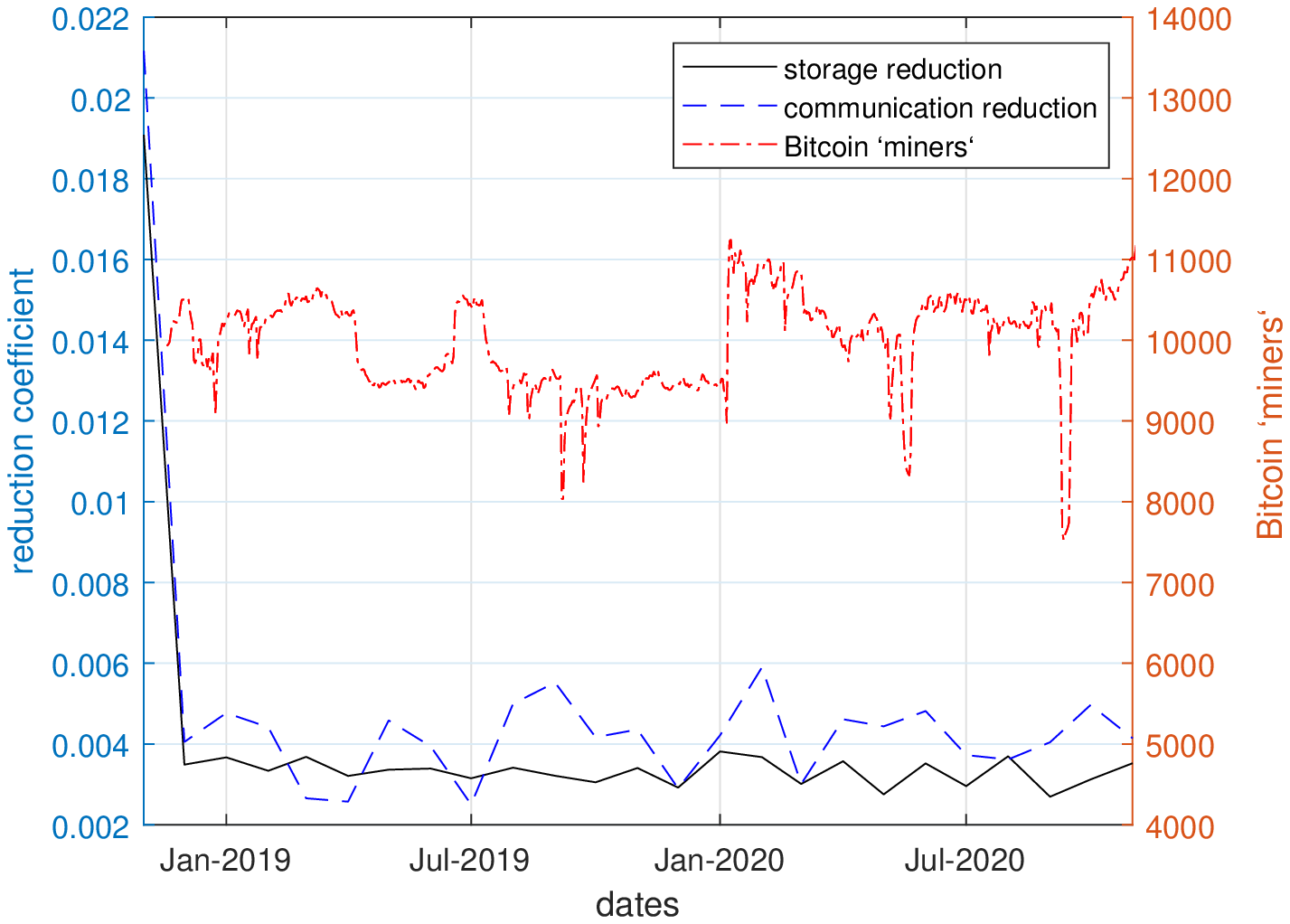}}\hspace*{-0.3em}
     \subfigure[{ }\label{bitcoin_plot_actual}]{\includegraphics[scale = 0.41]{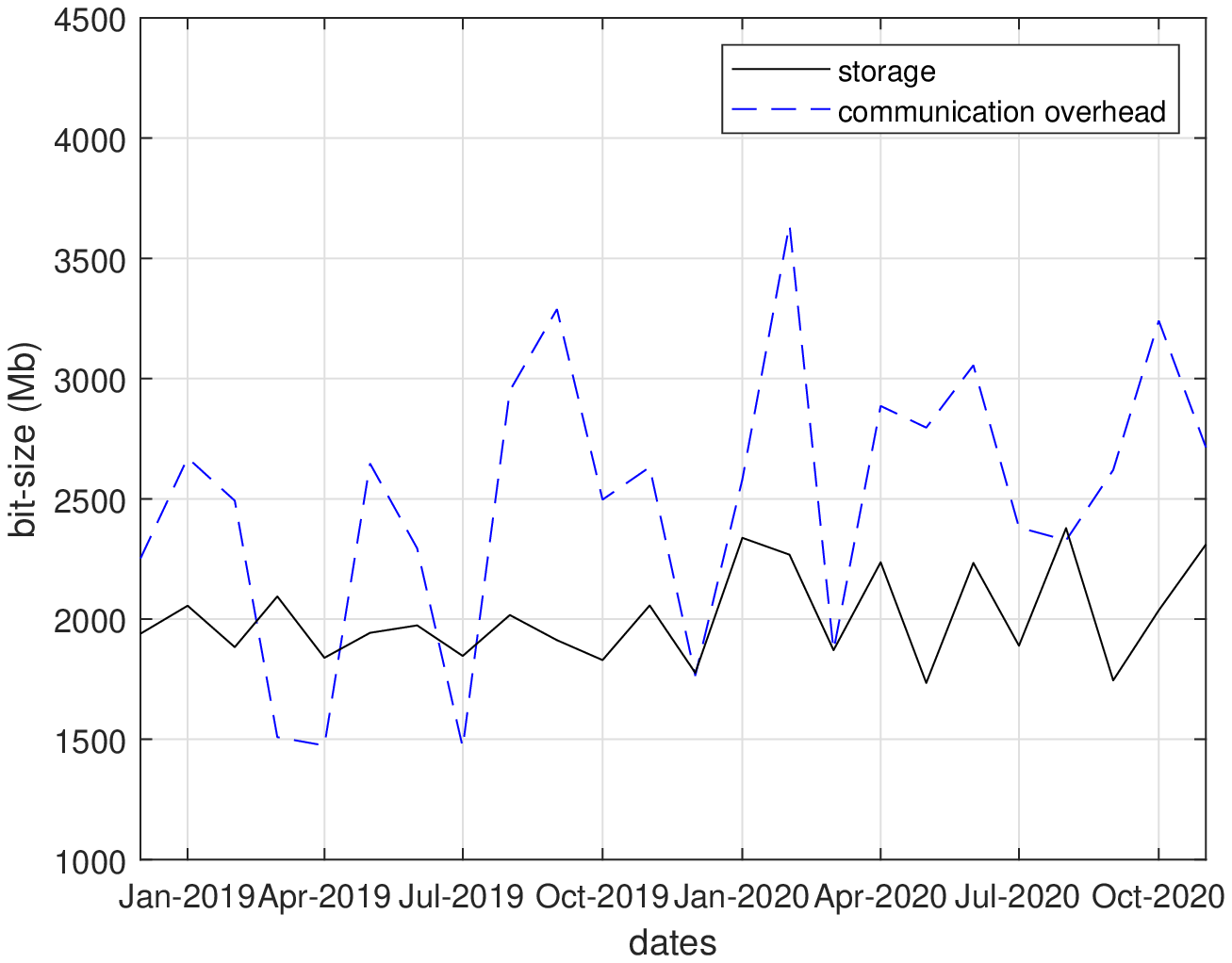}}\hspace*{-0.3em}
     \subfigure[{ }\label{bitcoin_cdf}]{\includegraphics[scale = 0.41]{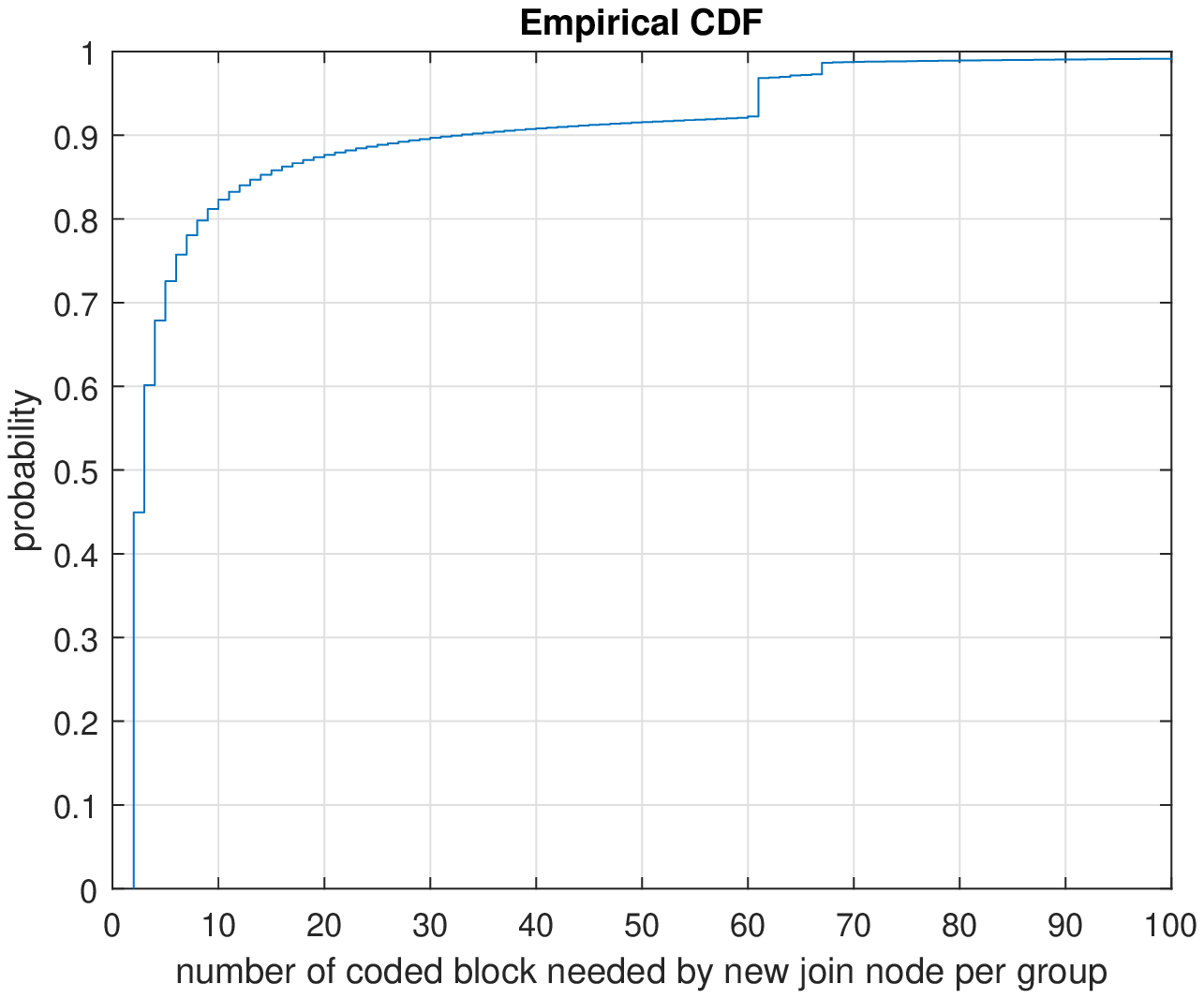}}
   }
  \caption{(a) Storage and communication reduction coefficients in real Bitcoin blockchain; (b) Storage and communication overhead bit-sizes in real Bitcoin blockchain; (c) The CDF of the number of coded blocks that a new joining node needs in real Bitcoin blockchain. }
  \label{ThreeFigures}
\end{figure*}



We show the month average storage reduction coefficient defined in \eqref{equ_storage} and the communication reduction coefficient in Fig. \ref{bitcoin_plot_ratio}. We see that both storage and communication reduction coefficients shrink to around 0.4\% and remain constants after Dec. 2018. 
This is because during the initial stage of the implementation, there are a very large number of original blocks that should be encoded into the coded blocks. Thus, the storage and communication reduction coefficients are both high at the beginning. 

In Fig. \ref{bitcoin_plot_actual}, we show the bit-size in Mb of the average storage at each node and the average communication overhead for each new joining node over a month. 
Recall that for the traditional replicated blockchain, both storage and communication overheads increase from 540 Gb to 640 Gb. So one can see the dramatic reduction by using the rateless coded blockchain.
Even though the blockchain size has increased by 100 Gb from Dec. 2018 to Nov. 2020, the average storage and communication overheads have increased by only 400 Mb and 700 Mb, respectively. The trends of storage and communication overheads in Fig. \ref{bitcoin_plot_actual} are similar to those in Fig. \ref{bitcoin_plot_ratio} because the Bitcoin blockchain length increases at a constant rate of 144 blocks per day.


Fig. \ref{bitcoin_cdf} shows the CDF of the number of coded blocks, either systematic or non-systematic ones, that a new node collects. 
With very high probability, i.e., 99\%, the new joining node only needs to obtain 70 coded blocks from other nodes per group, while on average there are  $10000 \times 0.337 = 3370$ blocks in each group. 
%
The number of blocks collected by a new node is slightly higher than that in Fig. \ref{CDF_coded_block} in Section \ref{subsec_rapid}. This is because there are more nodes in the Bitcoin network - 10,000 as opposed to 5,000 assumed in  Section \ref{subsec_rapid}. 
Hence the size of the groups and the expected generator degree in the Bitcoin network are both higher than those in Section \ref{subsec_rapid}. 
Although it is hard to observe in Fig. \ref{bitcoin_cdf}, we found that the probability that a new node has to decode a group $\mathcal{G}_m$, rather than directly encode or repair, is less than $6.3 \times 10^{-5}$. The decoding has relatively large complexity, so it is preferable to keep this probability low. 


In summary, if all Bitcoin blockchain nodes store the original blocks that are needed for verifying new transactions (around 5 GB \cite{bitcoin_core}) and encode all other old blocks using the proposed rateless coded blockchain, then each node needs a total storage space of around 7 GB. 
In this way, all old blocks are recoverable rather than simply deleted. In addition, when a new Bitcoin node joins the network, it only needs to download 8 GB of data.
%
These values are significantly lower than those of the current Bitcoin, which needs to download and store more than 300 Gb (600 Gb if all blocks are fully loaded).

\section{Conclusions \label{sec_conclusion}}

We have proposed a rateless coded blockchain to reduce the node storage requirement and the communication overhead for new joining nodes in practical dynamic IoT blockchain networks. It adjusts the coding parameters according to the network conditions to guarantee that all encoded blocks can be decoded whenever needed with very high probabilities. We have also presented novel enhanced blocks, which are embedded into the blockchain, to store the coding parameters, so that these parameters are immutable and can be shared among all nodes via the integrated and distributive features of the blockchain. Extensive simulations have been conducted using both synthetic data and real-world Bitcoin data. Results show that both the storage requirement at each node and the communication overhead of each new joining node are reduced by 99.6\% as compared with the traditional uncoded blockchain. 
A potential future work is to determine coding parameters of raptor code, such as pre-code rate and degree distribution, to have a lower decoding failure probability in the context of coded blockchain. 

\bibliographystyle{ieeetr}
\bibliography{ref}

\begin{thebibliography}{10}

\bibitem{dai2019blockchain}
H.-N. Dai, Z.~Zheng, and Y.~Zhang, ``Blockchain for {I}nternet of things: A
  survey,'' {\em IEEE Internet of Things Journal}, vol.~6, no.~5,
  pp.~8076--8094, 2019.

\bibitem{karame2016bitcoin}
G.~O. Karame and E.~Androulaki, {\em Bitcoin and {B}lockchain {S}ecurity}.
\newblock Artech House, 2016.

\bibitem{fan2022scalable}
X.~Fan, B.~Niu, and Z.~Liu, ``Scalable blockchain storage systems: research
  progress and models,'' {\em Computing}, vol.~104, no.~6, pp.~1497--1524,
  2022.

\bibitem{bitcoin_size}
Blockchain, ``Bitcoin blockchain size,'' 2022.
\newblock [Online]. Available:
  \url{https://www.blockchain.com/charts/blocks-size}.

\bibitem{bitcoin_core}
``Bitcoin core,'' 2020.
\newblock [Online]. Available: \url{https://bitcoincore.org/}.

\bibitem{zhou2020solutions}
Q.~Zhou, H.~Huang, Z.~Zheng, and J.~Bian, ``Solutions to scalability of
  blockchain: A survey,'' {\em Ieee Access}, vol.~8, pp.~16440--16455, 2020.

\bibitem{nakamoto2008bitcoin}
S.~Nakamoto {\em et~al.}, ``Bitcoin: A peer-to-peer electronic cash system,''
  2008.
\newblock [white paper].

\bibitem{dryja2019utreexo}
T.~Dryja, ``Utreexo: A dynamic hash-based accumulator optimized for the bitcoin
  utxo set.,'' {\em IACR Cryptol. ePrint Arch.}, vol.~2019, p.~611, 2019.

\bibitem{zamani2018rapidchain}
M.~Zamani, M.~Movahedi, and M.~Raykova, ``Rapidchain: Scaling blockchain via
  full sharding,'' in {\em Proceedings of the ACM Conference on Computer and
  Communications Security}, pp.~931--948, 2018.

\bibitem{huang2022brokerchain}
H.~Huang, X.~Peng, J.~Zhan, S.~Zhang, Y.~Lin, Z.~Zheng, and S.~Guo,
  ``Brokerchain: A cross-shard blockchain protocol for account/balance-based
  state sharding,'' in {\em IEEE INFOCOM}, 2022.

\bibitem{das2018security}
S.~Das, A.~Kolluri, P.~Saxena, and H.~Yu, ``On the security of blockchain
  consensus protocols,'' in {\em Proceedings of International Conference on
  Information Systems Security}, pp.~465--480, 2018.

\bibitem{dai2018low}
M.~Dai, S.~Zhang, H.~Wang, and S.~Jin, ``A low storage room requirement
  framework for distributed ledger in blockchain,'' {\em IEEE Access}, vol.~6,
  pp.~22970--22975, 2018.

\bibitem{perard2018erasure}
D.~Perard, J.~Lacan, Y.~Bachy, and J.~Detchart, ``Erasure code-based low
  storage blockchain node,'' in {\em Proceedings of IEEE International
  Conference on Internet of Things (iThings) and IEEE Green Computing and
  Communications (GreenCom) and IEEE Cyber, Physical and Social Computing
  (CPSCom) and IEEE Smart Data (SmartData)}, pp.~1622--1627, 2018.

\bibitem{wu2020distributed}
H.~Wu, A.~Ashikhmin, X.~Wang, C.~Li, S.~Yang, and L.~Zhang, ``Distributed error
  correction coding scheme for low storage blockchain systems,'' {\em IEEE
  Internet of Things Journal}, vol.~7, no.~8, pp.~7054--7071, 2020.

\bibitem{raman2021ToN}
R.~K. Raman and L.~R. Varshney, ``Coding for scalable blockchains via dynamic
  distributed storage,'' {\em IEEE/ACM Transactions on Networking.}, vol.~29,
  pp.~2588--2601, Dec. 2021.

\bibitem{yang2021storage}
C.~Yang, X.~Wang, and A.~Ashikhmin, ``Storage and communication tradeoff for
  wireless coded blockchains,'' {\em IEEE Systems Journal}, vol.~16, pp.~2911
  -- 2922, June 2022.

\bibitem{cebe2018network}
M.~Cebe, B.~Kaplan, and K.~Akkaya, ``A network coding based information
  spreading approach for permissioned blockchain in {IoT} settings,'' in {\em
  ACM Mobiquitous}, (New York, USA), pp.~470--475, Nov. 2018.

\bibitem{chawla2019velocity}
N.~Chawla, H.~W. Behrens, D.~Tapp, D.~Boscovic, and K.~S. Candan, ``Velocity:
  Scalability improvements in block propagation through rateless erasure
  coding,'' in {\em IEEE International Conference on Blockchain and
  Cryptocurrency (ICBC)}, (Seoul, South Korea), pp.~447--454, May 2019.

\bibitem{9609913}
M.~Braun, A.~Wiesmaier, N.~Alnahawi, and J.~{\ss}eibler, ``On message-based
  consensus and network coding,'' in {\em 12th International Conference on
  Network of the Future (NoF)}, (Coimbra, Portugal), pp.~1--9, 2021.

\bibitem{zhang2022speeding}
L.~Zhang, T.~Wang, and S.~C. Liew, ``Speeding up block propagation in bitcoin
  network: Uncoded and coded designs,'' {\em Computer Networks}, vol.~206,
  p.~108791, Apr. 2022.

\bibitem{al2018fraud}
M.~Al-Bassam, A.~Sonnino, and V.~Buterin, ``Fraud and data availability proofs:
  Maximising light client security and scaling blockchains with dishonest
  majorities,'' {\em arXiv preprint arXiv:1809.09044}, 2018.

\bibitem{santini2022optimization}
P.~Santini, G.~Rafaiani, M.~Battaglioni, F.~Chiaraluce, and M.~Baldi,
  ``Optimization of a reed-solomon code-based protocol against blockchain data
  availability attacks,'' {\em arXiv preprint arXiv:2201.08261}, 2022.

\bibitem{yu2020coded}
M.~Yu, S.~Sahraei, S.~Li, S.~Avestimehr, S.~Kannan, and P.~Viswanath, ``Coded
  merkle tree: Solving data availability attacks in blockchains,'' in {\em
  International Conference on Financial Cryptography and Data Security}, (Kota
  Kinabalu, Malaysia), pp.~114--134, Feb. 2020.

\bibitem{mitra2021overcoming}
D.~Mitra, L.~Tauz, and L.~Dolecek, ``Overcoming data availability attacks in
  blockchain systems: {LDPC} code design for coded merkle tree,'' {\em arXiv
  preprint arXiv:2108.13332}, 2021.

\bibitem{mitra2022polar}
D.~Mitra, L.~Tauz, and L.~Dolecek, ``Polar coded merkle tree: Improved
  detection of data availability attacks in blockchain systems,'' in {\em IEEE
  International Symposium on Information Theory (ISIT)}, (Espoo, Finland), June
  2022.

\bibitem{li2020polyshard}
S.~Li, M.~Yu, C.-S. Yang, A.~S. Avestimehr, S.~Kannan, and P.~Viswanath,
  ``Polyshard: Coded sharding achieves linearly scaling efficiency and security
  simultaneously,'' {\em IEEE Transactions on Information Forensics and
  Security}, vol.~16, pp.~249--261, 2020.

\bibitem{sasidharan2021private}
B.~Sasidharan and E.~Viterbo, ``Private data access in blockchain systems
  employing coded sharding,'' in {\em IEEE International Symposium on
  Information Theory (ISIT)}, (Melbourne, Australia), pp.~2684--2689, July
  2021.

\bibitem{gadiraju2020secure}
D.~S. Gadiraju, V.~Lalitha, and V.~Aggarwal, ``Secure regenerating codes for
  reducing storage and bootstrap costs in sharded blockchains,'' in {\em IEEE
  International Conference on Blockchain (Blockchain)}, (Rhodes Island, USA),
  pp.~229--236, Nov. 2020.

\bibitem{1994Reed}
S.~B. Wicker and V.~K. Bhargava, {\em Reed-Solomon codes and their
  applications}.
\newblock Wiley-IEEE Press, 1994.

\bibitem{2002Low}
G.~Cancellieri, ``Low-density parity-check codes,'' {\em Wiley-IEEE Press},
  2002.

\bibitem{PolarCodes}
P.~Trifonov, ``Efficient design and decoding of polar codes,'' {\em IEEE Trans.
  Commun.}, vol.~60, no.~11, pp.~3221--3227, 2012.

\bibitem{bitcoin_number_2015}
``Bitcoin nodes number since 2015,'' 2020.
\newblock [Online]. Available: \url{https://coin.dance/nodes}.

\bibitem{luby2002lt}
M.~Luby, ``{LT} codes,'' in {\em Proceedings of the 43rd Symposium on
  Foundations of Computer Science}, pp.~271--280, 2002.

\bibitem{kadhe2019sef}
S.~Kadhe, J.~Chung, and K.~Ramchandran, ``{SeF}: A secure fountain architecture
  for slashing storage costs in blockchains,'' {\em arXiv:1906.12140}, 2019.

\bibitem{quan2019transparent}
L.~Quan and Q.~Huang, ``Transparent coded blockchain,'' in {\em Proceedings of
  the 15th International Conference on Emerging Networking Experiments and
  Technologies}, pp.~12--13, 2019.

\bibitem{pal2020fountain}
R.~Pal, ``Fountain coding for bootstrapping of the blockchain,'' in {\em
  International Conference on Communication Systems \& NETworkS (COMSNETS)},
  (Bengaluru, India), pp.~1--5, Jan. 2020.

\bibitem{qin2022downsampling}
Q.~Huang, L.~Quan, and S.~Zhang, ``Downsampling and transparent coding for
  blockchain,'' {\em IEEE Transactions on Network Science and Engineering},
  pp.~1--1, 2022.

\bibitem{tiwari2021secure}
A.~Tiwari and V.~Lalitha, ``Secure raptor encoder and decoder for low storage
  blockchain,'' in {\em IEEE COMSNET}, (Bangalore, India), pp.~161--165, Jan.
  2021.

\bibitem{shokrollahi2006raptor}
A.~Shokrollahi, ``Raptor codes,'' {\em IEEE Transactions on Information
  Theory}, vol.~52, no.~6, pp.~2551--2567, 2006.

\bibitem{mackay2005fountain}
D.~J. MacKay, ``Fountain codes,'' {\em IEE Proceedings-Communications},
  vol.~152, no.~6, pp.~1062--1068, 2005.

\bibitem{hussain2011error}
I.~Hussain, M.~Xiao, and L.~K. Rasmussen, ``Error floor analysis of {LT} codes
  over the additive white {G}aussian noise channel,'' in {\em Proceedings of
  IEEE Global Telecommunications Conference (GLOBECOM)}, pp.~1--5, 2011.

\bibitem{bitcoin_data}
``Bitcoin nodes number,'' 2022.
\newblock [Online]. Available: \url{https://bitnodes.io/dashboard/?days=730}.

\bibitem{hu2005regular}
X.-Y. Hu, E.~Eleftheriou, and D.-M. Arnold, ``Regular and irregular progressive
  edge-growth {T}anner graphs,'' {\em IEEE Transactions on Information Theory},
  vol.~51, no.~1, pp.~386--398, 2005.

\bibitem{li2010realistic}
X.~Li, M.~C. Huang, K.~Shen, and L.~Chu, ``A realistic evaluation of memory
  hardware errors and software system susceptibility,'' in {\em Proceedings of
  USENIX Annual Technical Conference}, pp.~75--88, 2010.

\bibitem{karlis2006bayesian}
D.~Karlis and I.~Ntzoufras, ``Bayesian analysis of the differences of count
  data,'' {\em Statistics in Medicine}, vol.~25, no.~11, pp.~1885--1905, 2006.

\bibitem{richardson2003error}
T.~Richardson, ``Error floors of {LDPC} codes,'' in {\em Proceedings of the
  41st Annual Allerton Conference on Communication Control and Computing},
  pp.~1426--1435, 2003.

\bibitem{pingpong}
``Bitcoin p2p network,'' 2020.
\newblock [Online]. Available:
  \url{https://developer.bitcoin.org/reference/p2p\_networking.html}.

\bibitem{bitcoin_wiki}
``Bitcoin wiki,'' 2020.
\newblock [Online]. Available: \url{https://en.bitcoin.it/wiki/}.

\end{thebibliography}
\end{document}